\newtheorem{theorem}{Theorem}
\newtheorem{lemma}[theorem]{Lemma}
\newtheorem{prop}[theorem]{Proposition}
\begin{document}
\title{Determination of period matrix of double of surface with boundary via its DN map}
\author{Dmitrii Korikov\thanks{St.Petersburg Department of Steklov Mathematical Institute, St. Petersburg, Russia,
        \newline
        e-mail: dmitrii.v.korikov@gmail.com,
        \newline
        ORCID: \url{https://orcid.org/0000-0002-3212-5874}}.}
\maketitle
\begin{abstract}
As is well-known, a conformal class of a surface $M$ with boundary $\Gamma$ is determined by its DN map $\Lambda$. In the paper, the algorithm for determination of the $b$-period matrix $\mathbb{B}$ of the (Schottky) double of surface with boundary via $\Lambda$ is presented. Due to the Torelli theorem, $\mathbb{B}$ contains all information on the conformal class of $M$ except the proper way of attaching $\Gamma$ to it.
\end{abstract}
\smallskip

\noindent{\bf Keywords:}\,\,\,electric impedance tomography of
surfaces, $b$-period matrix, Di\-rich\-let-to-Neumann map,
stability of determination, moduli space.

\smallskip

\noindent{\bf MSC:}\,\,\,35R30, 46J15, 46J20, 30F15.

\section{Introduction}
\paragraph{EIT problem.} Let $(M,g)$ be a surface (smooth oriented two-dimensional compact manifold) with (smooth) boundary $\Gamma$ diffeomorphic to a circle and smooth metric $g$. Let $\Delta$ be the Laplace-Beltrami operator on $(M,g)$; denote by $u^f$ the harmonic extension of the function $f\in H^{1/2}(\Gamma)$ into $M$. Let $\nu$ be the exterior normal vector on $\Gamma$. The continuous operator $\Lambda: \ H^{1/2}(\Gamma)\mapsto H^{-1/2}(\Gamma)$ defined by $\Lambda f:=\partial_\nu u^f$  is called the {\it Diriclet-to-Neumann} (DN) map. The two-dimensional {\it Electric Impedance Tomography} (EIT) problem consists in the determination of an unknown surface via its DN map.

It is well-known that the DN map $\Lambda$ determines only the conformal class of $(M,g)$ and the restriction of the metric to the boundary $\Gamma$. Namely, let $(M,g)$ and $(M',g')$ be two surfaces with the common boundary $\Gamma=\partial M=\partial M'$. We write $[(M,g)]=[(M',g')]$ if there is a conformal diffeomorphism between $(M,g)$ and $(M',g')$ which does not move the points of $\Gamma$. Then the theorem of Lassas and Uhlmann \cite{LU} states that $\Lambda=\Lambda'$ if and only if $[(M,g)]=[(M',g')]$ and $g$ and $g'$ induce the same length element on $\Gamma$. So, it is natural to understand the conformal class $[(M,g)]=:\mathscr{R}(\Lambda)$ as a solution to the EIT problem.

In \cite{BK_stab,K_stab}, the following natural result on the stability of solutions to the EIT problem is established. Let $\beta: \ M\mapsto M'$ be an orientation-preserving diffeomorphism and $x\in M$, then its differential $d\beta$ maps the unit circle (in the metric $g$) in $T_x M$ to some ellipse in $T_{\beta(x)}M'$ with major and minor semi-axes $r_>(x)$ and $r_<(x)$ (in the metric $g'$), respectively. The ratio $K_\beta(x)=r_>(x)/r_<(x)$ is called the {\it dilatation} of the map $\beta$ at $x$ while its maximum $K_\beta=\max_{x\in M}K_\beta(x)$ on $M$ is called the dilatation of $\beta$. Since $K_\beta=1$ if and only if $\beta$ is conformal, the quantity ${\rm log}K_\beta$ is a deviation of the map $\beta$ from being conformal. The {\it Teichmuller distance} between conformal classes $\tau=[(M,g)]$ and $\tau'=[(M',g')]$ is defined by 
\begin{equation}
\label{Teichmuller distance}
d_T(\tau,\tau'):=\frac{1}{2}\inf_\beta{\rm log}K_\beta,
\end{equation}
where the infimum is taken over all orientation-preserving diffeomorphisms from $M$ onto $M'$ which do not move the points of the common boundary $\Gamma$. Then $d_T$ is well-defined on conformal classes (i.e., it does not depend on the choice of the surfaces $(M,g)$ and $(M',g')$ representing $\tau$ and $\tau'$) and it is a {\it metric} on the space $\mathfrak{M}_{\mathfrak{g},\Gamma}$ of conformal classes $[(M,g)]$ of surfaces $(M,g)$ with given genus $\mathfrak{g}$ and boundary $\Gamma$. Note that, in the case $\Gamma=\varnothing$, the above definitions coincide with the definition of the classical {\it Teichm\"uller moduli space} $\mathcal{M}_\mathfrak{g}\equiv\mathfrak{M}_{\mathfrak{g},\varnothing}$ (see \cite{Alfors,Gardiner,Nag}). At the same time, the space $\mathscr{D}_{\mathfrak{g},\Gamma}$ of the DN maps of surfaces with genus $\mathfrak{g}$ and boundary $\Gamma$ is endowed with the metric given by the operator norm of the difference $d_{O}(\Lambda,\Lambda'):=\|\Lambda'-\Lambda\|_{H^{1/2}(\Gamma)\mapsto H^{-1/2}(\Gamma)}$. Then the stability result of \cite{BK_stab} states that the solving map $\mathscr{R}: \ \mathscr{D}_{\mathfrak{g},\Gamma}\to\mathscr{M}_{\mathfrak{g},\Gamma}$ is {\it continuous}. In other words, the closeness of $\Lambda'$ to $\Lambda$ implies the existence of a near-conformal diffeomorphism between $(M,g)$ and $(M',g')$ which does not move the points of $\Gamma$. This result is generalized in \cite{K_stab} for the non-orientable case and the case in which DN map is given only on a segment of the boundary. In addition, in \cite{K_stab}, it is proved that the map $\mathscr{R}: \ \mathscr{D}_{\mathfrak{g},\Gamma}\to\mathscr{M}_{\mathfrak{g},\Gamma}$ and its inverse are point Lipschitz continuous, i.e., the following local stability estimate holds
\begin{equation}
\label{stability estimate}
c(\Lambda)d_{O}(\Lambda,\Lambda')\le d_T(\mathscr{R}(\Lambda),\mathscr{R}(\Lambda'))\le C(\Lambda)d_{O}(\Lambda,\Lambda') \qquad (d_{O}(\Lambda,\Lambda')\le R(\Lambda))
\end{equation}
(here the positive constants $c(\Lambda),C(\Lambda),R(\Lambda)$ depend only on $\Lambda$). Note that, in the above stability result, both $\Lambda$ and $\Lambda'$ are assumed to be DN maps (and the corresponding surfaces are homeomorphic); the case of noisy boundary data was not discussed here.

\paragraph{Main result.} One may wonder if there is a more explicit connection between $\Lambda$ and $\mathscr{R}(\Lambda)$ that extends formula (\ref{stability estimate}) (e.g. differentiability/explicit formulas for conformal invariants of $(M,g)$, etc.)? However, the moduli spaces $\mathfrak{M}_{\mathfrak{g},\Gamma}$ of surfaces with fixed boundary $\Gamma$ are not finite-dimensional and thus are inconvenient for these purposes. This is due to the presence of infinitely many degrees of freedom related to different ways of attaching a surface to the curve $\Gamma$ (in other words, infinitely many reparametrizations of DN maps
$$\Lambda\mapsto\Lambda_\phi, \qquad \Lambda_\phi f:=(\Lambda(f\circ\phi^{-1}))\circ\phi,$$
where $\phi$ is an arbitrary diffeomorphism of $\Gamma$). One can get rid of these ``extra'' degrees of freedom by considering the (Schottky) {\it double} $\mathbb{M}$ of the surface $M$ which is the Riemann surface without boundary obtained by gluing two copies $M\times\{\pm\}$ of $(M,g)$ along the boundaries (i.e., by the identification $(x\times +)\sim (x\times -)$ of points $x\times +$ and $x\times -$, where $x\in\Gamma$). The double $\mathbb{M}$ is endowed with the anti-holomorphic involution $\tau: \ (x\times\pm)/\sim \ \mapsto (x\times\mp)/\sim$. One can identify $M$ with one of the submanifolds $(M\times\{\pm\})/\sim$ obtained by cutting $\mathbb{M}$ along the curve $\{x\in\mathbb{M} \ | \ \tau(x)=x\}$. Denote the conformal class of $\mathbb{M}$ by $\hat{\mathscr{R}}(\Lambda)$, where $\Lambda$ is the DN map of $M$. Then $\Lambda'=\Lambda$ (or even $\Lambda'=\Lambda_\phi$) implies $\hat{\mathscr{R}}(\Lambda)=\hat{\mathscr{R}}(\Lambda')$ and, due to the definition of the Teichm\"uller distance, inequality (\ref{stability estimate}) implies
\begin{equation}
\label{stability cover}
d_T(\hat{\mathscr{R}}(\Lambda),\hat{\mathscr{R}}(\Lambda'))\le C(\Lambda)d_{O}(\Lambda,\Lambda') \qquad (d_{O}(\Lambda,\Lambda')\le R(\Lambda)).
\end{equation}

The moduli space $\mathcal{M}_m$ of the surfaces of genus $m>1$ without boundaries is a complex $(3m-3)$-dimensional orbifold while the conformal classes of doubles of genus $\mathfrak{g}$ surfaces with boundaries diffeomorphic to a circle constitute the stratum $\mathcal{M}^{\circ}_\mathfrak{g}$ of real dimension $6\mathfrak{g}-3$ in $\mathcal{M}_{2\mathfrak{g}}$ \cite{FK}. Thereby, the original EIT problem $\Lambda\mapsto\mathscr{R}(\Lambda)$ is replaced by the finite-dimensional {\it reduced EIT problem} $\Lambda\mapsto\hat{\mathscr{R}}(\Lambda)$ which consists in determination of the appropriate coordinates of the double of $[(M,g)]$ in the moduli space via its DN map $\Lambda$. 

Most of the known (say, Fenchel–Nielsen's) local coordinates on the moduli space are highly dependent on the methods of their construction and are therefore inconvenient for the reduced EIT problem. The exception is the coordinates provided by the entries of {\it $b$-period matrices} of surfaces. Recall that a {\it Torelli marked surface} is a Riemann surface $X$ without boundary equipped with a choice of canonical homology basis $[l_\cdot]=\{a_1,\dots,a_{m},b_1,\dots,b_{m}\}$ (``marking'') on it. We say that two Torelli marked surfaces $(X,[l_\cdot])$ and $(X',[l'_\cdot])$ are equivalent if there is a biholomorphism $\beta$ between them which preserves the marking (i.e., $\beta\circ a_k=a'_k$, $\beta\circ b_k=b'_k$). The space $\mathcal{T}_{m}$ of equivalence classes of Torelli marked surfaces of genus $m$ (endowed with metric (\ref{Teichmuller distance}), where the infimum is taken over all marking-preserving diffeomorphisms) is the infinite-sheeted covering space of the moduli space $\mathcal{M}_m$ called the {\it Torelli space}. Let $\omega_1,\dots,\omega_m$ be the basis of holomorphic differentials on $X$ dual to the homology basis (i.e., thier periods obeys $T(\omega_i,a_j):=\int_{a_j}\omega_i=\delta_{ij}$). Then the $m\times m$-matrix $\mathbb{B}$ with the entries 
$$\mathbb{B}_{ij}=T(\omega_i,b_j):=\int_{b_j}\omega_i$$ 
is called the $b$-{\it period matrix} of the Torelli marked surface $(X,[l_\cdot])$. It is clear that $\mathbb{B}$ is a conformal invariant, i.e., it depends only on the class $[(X,[l_\cdot])]$ of $(X,[l_\cdot])$ in $\mathcal{T}_{m}$. Due to the {\it Torelli theorem} (\cite{Torelli}, see also \cite{GH,Milne,Narasimhan}), the $b$-period matrix $\mathbb{B}$ determines $[(X,[l_\cdot])]$, i.e., the map $[(X,[l_\cdot])]\mapsto\mathbb{B}$ is an injection. So, the entries of the $b$-period matrix are indeed the local coordinates on $\mathcal{M}_{m}$. Note that although the $b$-period matrix of $X$ is not uniquely determined by its conformal class $[X]\in\mathcal{M}_{m}$ due to the infinitely many choices of marking on $X$, any two $b$-period matrices of $X$ are related to each other via well-known transformations corresponding to the change of the canonical homology basis. In addition, the $b$-period matrices of surfaces of genus $m$ belong to the Siegel upper half-space $\mathcal{H}_m$ (the space of symmetric matrices with positive-definite imaginary parts) of the dimension $m(m+1)/2$ while the dimension of $\mathcal{M}_m$ is $3m-3$. Thus, the entries of the $b$-period matrix are not independent for higher genera $m>3$. In particular, the solutions to the reduced EIT problem (elements of $\mathcal{M}^{\circ}_{\mathfrak{g}}$) are described by $6\mathfrak{g}-3$ real parameters while their $b$-period matrices (considered as elements of $\mathcal{H}_{2\mathfrak{g}}$) provide $2\mathfrak{g}(2\mathfrak{g}+1)$ real parameters.

The main result of the paper is the {\it algorithm} for deriving the $b$-period matrix of the double $\mathbb{M}$ of the surface $(M,g)$ via its DN map $\Lambda$. It is presented at Steps 1-4, Section \ref{Algorithm sec}. The first (more or less standard) step is determining the boundary data associated with the Abelian differentials on the double $\mathbb{M}$ of $(M,g)$. As a result, we obtain the isomorphic copy (endowed with additional structures like inner product, e.t.c.) of the space $H^0(\mathbb{M};K)$ of Abelian differentails on $\mathbb{M}$ (Proposition \ref{determination of normal harmonic fields} and Lemma \ref{Connection between normal fields and symmetric differentials}). The second step, which is the key in our procedure, is the determination of the boundary data associated with the Abelian differentials whose periods have {\it integer} imaginary parts. Here the key trick (Proposition \ref{integer periods condition prop}) is reducing such a determination to solving the non-linear equations
\begin{equation}
\label{Main equation on the coefficients}
\partial_\gamma(H-i)\big[p_1^{\alpha_1}\dots p_{\mathfrak{g}}^{\alpha_\mathfrak{g}}q_1^{\beta_1}\dots q_{\mathfrak{g}}^{\beta_\mathfrak{g}}\big]=0.
\end{equation}
on the unknown real parameters $\alpha_1,\dots,\alpha_{\mathfrak{g}},\beta_1,\dots,\beta_{\mathfrak{g}}$. Here $\partial_\gamma$ is the differentiation along $\Gamma$ and $H:=\Lambda^{-1}\partial_\gamma$ is the {\it Hilbert transform} of the surface $(M,g)$; the functions $p_1,\dots,p_{\mathfrak{g}},q_1,\dots,q_{\mathfrak{g}}$ are determined by the eigenfunctions and eigenvalues of $H$ (via formula (\ref{sepatrate factors}) below). On the third step, we apply Proposition \ref{improved criterion for canonicity prop}) to construct the isomorphic copy of the basis in $H^0(\mathbb{M};K)$ dual to some canonical homology basis $[l_\cdot]$ on $\mathbb{M}$. On the last step, we calculate the $b$-period matrix $\mathbb{B}$ in this homology basis. By applying trivial transformations, one can obtain from $\mathbb{B}$ all other $b$-period matrices of $\mathbb{M}$. It worth noting that, although the homology basis $[l_\cdot]$ is unknown, it obeys an additional symmetry property (see formula (\ref{period matrix symmetries}) below) with respect to the involution on the double $\mathbb{M}$.

\paragraph{Comments.} 1) In its traditional understanding, the two-dimensional EIT consists in the construction (or the visualization) of some conformal copy of the surface with given DN map. There are several approaches to perform this. The method of \cite{LU} is based on the simultaneous analytic continuation of harmonic functions from the boundary in the coordinates provided by each other. In the algebraic approach of \cite{B}, the conformal copy of a surface is constructed as the spectrum (the set of multiplucative linear functionals) of the algebra of holomorphic functions on the surface; the latter being determined up to isomorphism by the DN map. As follows from the descriptions, both approaches are highly abstract and thus unsuitable for surface visualization. The method of \cite{HM,Michel} allows to construct the conformal copy as a part of an algebraic curve immersed in $\mathbb{CP}^2$ and thus is most appropriate for the visualization; however, this algorithm seems to be highly unstable under small perturbations of the DN map. The method of \cite{BK_stab_imm,BK_stab,K_stab} makes use of holomorphic embeddings into high-dimensional spaces $\mathbb{C}^n$ instead, which leads to the proof of the (Teichm\"uller) stability of solutions to the EIT problem. However, the applicability of the methods of these papers as an algorithm for construction a copy (including the stability of the solutions in the presence of noisy boundary data) has not been studied.

2) In contrast to the above approaches, we deal with the calculation of numerical parameters that encode the most informaton on the unknown surface $(M,g)$ including the conformal structure on it. Indeed, in view of the Torelli theorem \cite{Torelli}, the $b$-period matrix $\mathbb{B}$ determines (up to biholomorphism) the double $\mathbb{M}$ of $(M,g)$. In the generic case, $\mathbb{M}$ admits the unique antiholomorphic involution (the surfaces with several of them constitute the lower-dimensional stratum). Even in exceptional cases, the additional symmetry of $\mathbb{B}$ (provided by (\ref{period matrix symmetries})) allows one to choose the proper involution $\tau$ on $\mathbb{M}$. Now the cutting $\mathbb{M}$ along the set of fixed points of $\tau$ provides two conformal copies $M'$, $M''$ of $(M,g)$. So, the only information that is lost is the proper way of identifying of the points of the curve $\Gamma$ and the points of the boundary of $M'$ and the proper choice of the metric on the boundary with which $\Lambda$ becomes a DN map of $M'$. Although this information could in principle be obtained by including additional steps in the algorithm, this question is not covered in the present paper.

3) As showed in \cite{B}, the genus of the surface is determined by its DN map. Namely, if $H=\partial_\gamma\Lambda^{-1}$ is the Hilbert transform of the surface $M$, then its genus ${\rm gen}(M)$ is just the total multiplicity of eigenvalues of $H$ contained in $\mathbb{C}_+\backslash\{i\}$. It worth noting that the surface genus is not stable under small perturbations of its DN map \cite{K_stab_top}. Namely, by cutting small disks from $(M,g)$ and attaching a finite number $k$ of small handles, one provides the higher genus surface whose DN map is arbitrarily close to $\Lambda$. In this case, the $k$ ``extra'' eigenvalues of $H$ in $\mathbb{C}_+\backslash\{i\}$ are close to $i$. Note that one cannot lower the surface genus without significant change of its DN map.

4) As shown in \cite{BSh}, the real additive cohomology structure of the manifold with boundary is determined by its DN map defined on exterior differential forms. This result is improved in \cite{Shonkwiler} where it is proved that the information on the multiplicative structure (the cap product) of cohomologies can be also recovered from the DN map. Also, in \cite{Shonkwiler}, a simple connection between the eigenvalues of the Hilbert transform and Poincar\'e duality angles of the manifold is established. The methods of \cite{BSh,Shonkwiler} have much in common with Step 1 of the present algorithm.

5) The continuous dependence of the $b$-period matrix $\mathbb{B}$ of $\mathbb{M}=\hat{\mathscr{R}}(\Lambda)$ on the DN map $\Lambda\in\mathcal{D}_{\mathfrak{g},\Gamma}$ (provided the appropriate choice of marking on $\mathbb{M}$) trivially follows from estimate (\ref{stability cover}). The {\it stability of the algorithm} for determining $\mathbb{B}$ in the presence of small noise in the boundary data is discussed in the end of Section \ref{Algorithm sec}. There we also prove the following convergence-type stability result.
\begin{prop}
\label{convergence prop}
Let $\Lambda$ be a fixed DN map of some surface $(M,g)$ of genus $\mathfrak{g}$ with {\rm(}known{\rm)} boundary $\Gamma$. Then there are sufficiently small numbers $\varepsilon_0=\varepsilon_0(\Lambda)>0$ and $c_0=c_0(\Lambda)>0$ such that the implementation of the algorithm Steps 1-4 to any approximation $\Lambda'$ of $\Lambda$ obeying 
$$\|\Lambda'-\Lambda\|_{H^{1}(\Gamma)\to L_2(\Gamma)}=\varepsilon<\varepsilon_0$$
provides the matrix $\mathbb{B}'$ obeying
$$\|\mathbb{B}'-\mathbb{B}\|_{M^{2\mathfrak{g}\times 2\mathfrak{g}}}\le c_0\varepsilon,$$ 
where $\mathbb{B}$ is some $b$-period matrix of the double $\mathbb{M}$ of $(M,g)$. {\rm(}Note that the implementation of Steps 1-4 requires the a priori knowledge of the noise bound $\varepsilon$.{\rm)}
\end{prop}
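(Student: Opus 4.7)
The plan is to propagate the $O(\varepsilon)$ perturbation from the DN map through each of the four stages of the algorithm, establishing Lipschitz continuity at each step.

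\emph{Step 1 (stability of the Hilbert transform and its spectral data).} I would first argue that $H':=(\Lambda')^{-1}\partial_\gamma$ converges to $H$ at rate $O(\varepsilon)$ in the relevant operator topology. Since $\Lambda$ is elliptic of order one and invertible modulo constants, the hypothesis $\|\Lambda'-\Lambda\|_{H^1(\Gamma)\to L_2(\Gamma)}=\varepsilon$ combined with the Neumann series for $(\Lambda')^{-1}$ yields this estimate once $\varepsilon$ is smaller than the bound associated with the invertibility of $\Lambda$. The $2\mathfrak{g}$ eigenvalues of $H$ in $\mathbb{C}_+\setminus\{i\}$ are separated from the rest of the spectrum by a positive gap $\delta(\Lambda)>0$; the associated Riesz projectors and eigenvectors then depend Lipschitz-continuously on $\Lambda$ by standard analytic perturbation theory. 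This gives $O(\varepsilon)$ perturbations of the boundary data identifying $H^0(\mathrm{M};K)$ produced by Proposition \ref{determination of normal harmonic fields} and Lemma \ref{Connection between normal fields and symmetric differentials}, and in particular of the functions $p_k,q_k$ entering \eqref{Main equation on the coefficients}.

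\emph{Step 2 (stability of the integer-period parameters).} This is the crucial step. For the true $\Lambda$, the unknown real parameters $(\alpha_1,\dots,\alpha_{\mathfrak{g}},\beta_1,\dots,\beta_{\mathfrak{g}})$ are characterized by the non-linear equation \eqref{Main equation on the coefficients}, and by Proposition \ref{integer periods condition prop} the admissible values form a discrete set (a lattice) corresponding to the integer imaginary periods. I would linearize \eqref{Main equation on the coefficients} at a chosen true solution and verify non-degeneracy of the Jacobian: this reduces to showing that the $\mathbb{R}$-linear map sending $(\alpha,\beta)$ to the imaginary parts of the associated periods is an isomorphism on the relevant $2\mathfrak{g}$-dimensional real space, which is built into the construction. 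The quantitative implicit function theorem then yields a unique nearby solution $(\alpha',\beta')$ with $\|(\alpha',\beta')-(\alpha,\beta)\|=O(\varepsilon)$, provided $\varepsilon<\varepsilon_0(\Lambda)$ is small enough that the perturbed solution does not jump to a different lattice point (this fixes $\varepsilon_0$ in terms of the lattice spacing in the implicit-function norm).

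\emph{Steps 3--4 (stability of the canonical basis and of the period integrals).} Using Proposition \ref{improved criterion for canonicity prop}, a basis of $H^0(\mathrm{M};K)$ dual to a canonical homology basis is extracted by finite-dimensional linear-algebraic operations (selection among the integer-period differentials, inversion of a Gram-type matrix, normalization) applied to boundary data that are $O(\varepsilon)$-close to the exact ones. Non-degeneracy of these operations at $\Lambda$ together with Cramer's rule propagates the $O(\varepsilon)$ estimate to the dual basis. The final evaluation of $\mathbb{B}_{ij}=\int_{b_j}\omega_i$ reduces to bounded linear functionals of the perturbed boundary data and is trivially Lipschitz. Chaining the estimates, and absorbing the inverse Jacobian norm from Step 2 together with the spectral gap $\delta(\Lambda)$ from Step 1 and the condition numbers from Step 3 into a single constant $c_0(\Lambda)$, produces the required bound $\|\mathbb{B}'-\mathbb{B}\|\le c_0\varepsilon$.

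\emph{Main obstacle.} The only non-routine step is Step 2: the combination of non-linearity of \eqref{Main equation on the coefficients} with the \emph{discrete} target (integer imaginary periods) forces both the smallness threshold $\varepsilon_0(\Lambda)$ and the constant $c_0(\Lambda)$ to depend on how clustered the admissible lattice points are near the true solution and on the conditioning of the linearization. Verifying quantitatively that no spurious near-solutions appear under the perturbation, and that the correct branch $(\alpha',\beta')$ is the one selected by the algorithm, is the delicate part of the argument.
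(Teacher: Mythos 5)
Your proposal is correct and follows essentially the same route as the paper: propagate $O(\varepsilon)$ bounds through the four steps via spectral perturbation of $H$, non-degeneracy of the linearization of (\ref{Main equation on the coefficients}) at its solutions (the paper phrases this as a linear lower bound for an error functional $\mathscr{E}$ near its zeros rather than as a quantitative implicit function theorem, and proves injectivity of the derivative by the same holomorphy contradiction you allude to), and integrality of the intersection form to pin down an exactly canonical basis from approximate data. The delicate points you flag (spurious near-solutions, branch selection) are exactly the ones the paper handles by excising $\sqrt{\varepsilon}$-neighborhoods of found minima and by the integer-gap argument in Step 3.
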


\section{Preliminaries} 
\paragraph{Complex structure.} As is well-known, the orientation and the conformal class $[g]$ of metrics on $M$ determine the unique complex structure (biholomorphic sub-atlas of the smooth oriented atlas on $M$) on it, such that, in any holomorphic coordinate $z$ on $M$, the metric $g$ is of the form $g(z)=\rho(z)|dz|^2$, where $\rho(z)>0$ (equivalently, $(\star+i\,{\rm Id}) dz=0$, where $\star$ is the Hodge operator on $(M,g)$). Given this complex structure, a function $w$ on $M$ is holomorphic (resp., anti-holomorphic) if and only if the Cauchy-Riemann condition $d\Im w=\star d\Re w$ (resp., $d\Im w=-\star d\Re w$) holds. The space of functions holomorhic on ${\rm int}M$ and smooth up to the boundary $\Gamma$ is denoted by $\mathscr{A}(M)$.

The operator 
$$\Phi: \ A\mapsto (\star A^\flat)^\sharp$$ 
(here $\flat: \ TM\mapsto T^*M$ and $\sharp:=\flat^{-1}$ are the musical isomorphisms defined by $A^\flat:=g(A,\cdot)$) acts as the counterclockwise rotation on the right angle in each tangent space $T_x M$ ($x\in M$). Note that both $\star$ and $\Phi$ are independent of the choice of metric $g$ from the conformal class $[g]$. The Cauchy-Riemann condition can be rewritten as $\nabla\Im w=\Phi\Re\nabla w$ (in any metric from $[g]$). 

Choose the unit tangent vector $\gamma$ on $\Gamma$. In the subsequent, we agree that the orientations of $M$ and $\Gamma$ are related by 
\begin{equation}
\label{orientation of boundary}
\Phi\nu=\gamma.
\end{equation}

\paragraph{Harmonic fields.} Denote by $L_2(M;TM)$ the space of square integrable vector fields on $M$, endowed with the inner product $(A,B):=\int_M g(A,B)dS$. The harmonic fields constitute the (closed) subspace  
$$\mathcal{H}:=\{A\in  L_2(M;TM) \ | \ {\rm div}(\Phi A)={\rm div}A=0 \text{ in } M\}$$
in $L_2(M;TM)$. By definition, the rotation $\Phi$ is an isometric automorphisms of $L_2(M;TM)$ which preserve harmonicity and obeys $\Phi^{-1}=\Phi^*=-\Phi$. Also, each harmonic field $A$ on $M$ can be represented as $A=\nabla u$ (with harmonic $u$) in any simple-connected domain in $M$.

Introduce the subspace of potential fields $\mathcal{E}:=\{\nabla u\in\mathcal{H}\}$ and denote by $\mathcal{D}$ its orthogonal complement in $\mathcal{H}$. Let $\mathcal{N}=\Phi\mathcal{D}$. In view of the Stokes theorem, formula
$$(A,\nabla u)=\int\limits_M {\rm div}(uA)=\int\limits_\Gamma uA_\nu dl$$
holds for any  $A\in\mathcal{H}$ and $u\in C^\infty(\overline{M})$, where $A_\nu:=g(A,\nu)$. Thus, a harmonic field belongs to $\mathcal{D}$ ($\mathcal{N}$) if and only if it is tangent (normal) to $\Gamma$. In particular, any $A\in\mathcal{D}$ ($A\in\mathcal{N}$) is smooth up to the boundary due to the increasing smoothness theorems for solutions to elliptic boundary value problems. Note that 
\begin{equation}
\label{dim of defect space}
{\rm dim}\mathcal{D}={\rm dim}\mathcal{N}=2\mathfrak{g}.
\end{equation}
Denote $A_\gamma:=g(A,\gamma)$. Let $A\in\mathcal{D}$; then $\Phi A\in \mathcal{H}$ and the Stokes theorem yields $\int_{\Gamma} A_\gamma dl=-\int_M {\rm div}(\Phi A)dS=0$. Thus, each $A\in\mathcal{D}\cup\mathcal{N}$ can be represented as $A=\nabla u$ in a tubular neighborhood of $\Gamma$. In particular, the maps $\mathcal{D}\ni A\mapsto A_\gamma$, $\mathcal{N}\ni B\mapsto B_\nu$ are injections due to the uniqueness of solution to the Cauchy problem for the Laplace equation.

\paragraph{Hilbert transform.} Denote by $P$ the orthogonal projection on $\mathcal{E}$ in $L_2(M;TM)$ and introduce the reduced rotation $\hat{\Phi}:=P\Phi P$. Since $\Phi$ is anti-hermitian, so is $\hat{\Phi}$. In what follows, we also consider the complexification $\hat{\Phi}(A+iB)=\hat{\Phi}A+i\hat{\Phi}B$ of $\hat{\Phi}$ acting in $L^\mathbb{C}_2(M;TM)$. 

Let $u=u^f$ be a harmonic function in $M$ with trace $f$ on $\Lambda$. Then $\Phi\nabla u$ is a harmonic field. From the orthogonal decomposition $\mathcal{H}=\mathcal{E}\oplus\mathcal{D}$, we have 
\begin{equation}
\label{Helmholtz decomposition 0}
\Phi\nabla u^f=\nabla v^h+A,
\end{equation}
where $A\in\mathcal{D}$ and $v^h$ is some harmonic function with the trace $h$. Hence, $\hat{\Phi}\nabla u^f=\nabla v^h$ and $-\hat{\Phi}\nabla v^h=\nabla u^f+P\Phi A$, i.e., 
$$(\hat{\Phi}+iI)\nabla w=-P\Phi A, \text{ where } w=u^f+iv^h.$$ 
Note that $P\Phi A=0$ if and only if $A=0$. Indeed, if $P\Phi A=0$, then $(\Phi A,\nabla u)=(\Phi A,P\nabla u)=(P\Phi A,\nabla u)=0$ for any smooth $u$ on $M$. Since $\Phi A$ is harmonic, it means that 
$$0=\int_M{\rm div}(u\Phi A)dS=\int_\Gamma u(\Phi A)_\nu dl=-\int_\Gamma uA_\gamma dl,$$ 
whence $A_\gamma=0$ on $\Gamma$ and $A=0$ in $M$. In view of (\ref{Helmholtz decomposition 0}), we obtain the following criteria: $w$ is holomorphic (resp., antiholomorphic) in $M$ if and only if $\nabla w$ is an eigenvector of $\hat{\Phi}$ corresponding to the eigenvalue $-i$ (resp., $+i$). 

In view of (\ref{Helmholtz decomposition 0}), the equality $\hat{\Phi}\nabla u^f=0$ implies $\nabla v^h=\hat{\Phi}\nabla u^f=0$ and $\partial_\gamma f=-(\Phi\nabla u^f)_\nu=-\partial_\nu v^h+0=0$, whence $u^f={\rm const}$ and $\nabla u^f=0$. In addition, (\ref{Helmholtz decomposition 0}) implies that $(\hat{\Phi}-\Phi)\nabla u^f=0$ if and only if $A=0$, i.e., if and only if $\nabla u^f\in {\rm Ker}(\hat{\Phi}-i)\oplus{\rm Ker}(\hat{\Phi}+i)$. 

Let us show that 
\begin{equation}
\label{all defect vectors}
(\Phi-\hat{\Phi})\mathcal{E}=\mathcal{D}.
\end{equation}
Indeed, since the left-hand side is equal to $(I-P)\Phi\mathcal{E}$, it is contained in $\mathcal{D}$. Next, suppose that the field $A\in\mathcal{D}$ is orthogonal to $(I-P)\Phi\mathcal{E}$. Then $-(\Phi A,\nabla u)=(A,\Phi\nabla u)=(A,P\Phi\nabla u)=(PA,\Phi\nabla u)=0$ for any $u\in C^\infty(\overline{M})$ and $\Phi A\in \mathcal{D}$. The last equality means that $A_\gamma=-(\Phi A)_\nu=0$. Therefore, $A=0$ in $M$.

In view of the aforementioned, the eigenvalues of $\hat{\Phi}$ are $0$ (the corresponding eigenspace is $L^\mathbb{C}_2(M;TM)\ominus\mathcal{E}^\mathbb{C}$), $-i$ and $+i$ (the corresponding eigenspaces consist of gradients of holomorphic and anti-holomorphic functions on $M$, respectively) and the remaining eigenvalues have the total multiplicity ${\rm dim}(\Phi-\hat{\Phi})\mathcal{E}={\rm dim}\mathcal{D}=2\mathfrak{g}$ in view of (\ref{dim of defect space}). Since $\hat{\Phi}\nabla\overline{w}=\overline{\hat{\Phi}\nabla}w$ and $\hat{\Phi}$ is anti-hermitian, the remaining eigenvalues (counted with their multiplicities) can be represented as
\begin{equation}
\label{eigenvalues}
\lambda_{\pm k}=i\mu_{\pm k}, \qquad \mu_{\pm k}=-\mu_{\mp k}\in\mathbb{R} \qquad (k=1,\dots,\mathfrak{g}).
\end{equation}

Denote by $(\cdot,\cdot)_\Gamma$ the inner product in $L^{\mathbb{C}}_2(\Gamma;dl)$. Let $\langle f\rangle:=(f,1)_\Gamma/(1,1)_\Gamma$ denotes the mean value of $f$ on $(\Gamma,dl)$. In view of the Green formula
\begin{equation}
\label{Green formula}
(\nabla u^f,\nabla u^h)=(\Lambda f,h)_\Gamma=:(f,h)_\Lambda,
\end{equation}
the map $\mathfrak{E}: \ f\mapsto \nabla u^f$ is an isometry from the space $\partial_\gamma H^{3/2}(\Gamma;\mathbb{C})=\{f\in H^{1/2}(\Gamma;\mathbb{C}) \ | \ \langle f\rangle=0\}$ equipped with the inner product $(\cdot,\cdot)_\Lambda$ onto $\mathcal{E}^\mathbb{C}$. 

We define the {\it Hilbert transform} as the isomorphic copy 
$$H:=-\mathfrak{E}^{-1}\hat{\Phi}\mathfrak{E}$$ 
of the reduced rotation $\hat{\Phi}=P\Phi P$ (the minus sign is introduced to match the usual definition of the Hilbert transform on the circle). Then the Stokes theorem yields
\begin{align*}
(\Lambda Hf,h)_\Gamma=(Hf,h)_\Lambda=-(P\Phi P\nabla u^f,\nabla u^h)=&(\Phi \nabla u^f,\nabla u^h)=\\
=&\int_M {\rm div}(\overline{u^h}\Phi\nabla u^f)dS=(-\partial_\gamma f,h)_\Gamma
\end{align*}
for any $f,h\in \partial_\gamma H^{3/2}(\Gamma;\mathbb{C})$. Hence,
\begin{equation}
\label{Hilbert transfrom}
H=\Lambda^{-1}\partial_\gamma, \qquad H^{-1}=\partial^{-1}_\gamma\Lambda.
\end{equation}
Here $\partial_\gamma^{-1}$ is the integration with respect to the length element along $\Gamma$ in the direction $\gamma$. Since the images $\Lambda H^1(\Gamma;\mathbb{C})$ and $\partial_\gamma H^1(\Gamma;\mathbb{C})$ are orthogonal to constants in $L_2(\Gamma)$, both operators (\ref{Hilbert transfrom}) are well-defined on $H^1(\Gamma;\mathbb{C})$. In addition, the DN map $\Lambda$ is a pseudo-differential operator of the first order which coincides with $|\partial_\gamma|$ modulo smoothing operator \cite{LeeU}. Thus $H=-|\partial_\gamma|^{-1}\partial_\gamma$ modulo smoothing operator and both operators (\ref{Hilbert transfrom}) are well-defined on $L^{\mathbb{C}}_2(\Gamma;dl)$. Although the second operator $-\partial^{-1}_\gamma\Lambda$ in (\ref{Hilbert transfrom}) inverts $H$ only on the orthogonal complement of the constants, we keep the (slightly abusing) notation $H^{-1}$ for it. 

Note that $H$ coincides with the standard Hilbert transform on the circle if $M$ is a closed unit disk $\overline{\mathbb{D}}$. The extensions of the standard Hilbert transform on the circle were considered in \cite{BKChar,BSh}, while the above definition (slightly different and based on the connection between $H$ and the reduced rotation $\hat{\Phi}$) is proposed by Belishev.

Denote by ${\rm Tr}$ the trace operator $w\mapsto w|_\Gamma$ on $\Gamma$. In view of the aforementioned, we arrive at the following statement.
\begin{lemma}
\label{traces of holom func lemma}
$H$ is an anti-hermitian operator in the space $(\partial_\gamma H^{3/2}(\Gamma;\mathbb{C}), \, (\cdot,\cdot)_\Lambda)$. The spectrum of $H$ consists of $0$ {\rm(}with ${\rm Ker}H=\mathbb{C}${\rm)}, $\pm i$ {\rm(}with the eigenspaces 
\begin{align*}
{\rm Ker}&(H+i)={\rm clos}_{H^{1/2}(\Gamma;\mathbb{C})}\big(\{\eta\in {\rm Tr}\mathscr{A}(M) \ | \ \langle\eta\rangle=0\}\big), \\
{\rm Ker}&(H-i)={\rm clos}_{H^{1/2}(\Gamma;\mathbb{C})}\big(\{\eta\in {\rm Tr}\overline{\mathscr{A}(M)} \ | \ \langle\eta\rangle=0\}\big),
\end{align*}
respectively{\rm)}, and eigenvalues {\rm (\ref{eigenvalues})}. The eigenfunctions
\begin{equation}
\label{basis of eigenfunctions}
\eta_{\pm k}=\overline{\eta_{\mp k}} \qquad (k=1,\dots,\mathfrak{g})
\end{equation}
corresponding to $\lambda_{\pm k}$ are smooth. {\rm(}In what follows, we assume that eigenfunctions {\rm(\ref{basis of eigenfunctions})} are normalized in $L_2^\mathbb{C}(\Gamma;dl)$ and the eigenfunctions corresponding to the same eigenvalue are orthogonal in $L_2^\mathbb{C}(\Gamma;dl)$.{\rm)}
\end{lemma}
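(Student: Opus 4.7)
The plan is to derive each assertion of the lemma by transferring the already-established spectral analysis of the reduced rotation $\hat{\Phi}$ across the unitary-equivalence-type identity $H=-\mathfrak{E}^{-1}\hat{\Phi}\mathfrak{E}$, and then extending from $\partial_\gamma H^{3/2}(\Gamma;\mathbb{C})$ to the $L_2^{\mathbb{C}}(\Gamma;dl)$-level through the formula $H=\Lambda^{-1}\partial_\gamma$. Anti-hermiticity of $H$ is immediate: since $\Phi^*=-\Phi$ and $P^*=P$, we have $\hat{\Phi}^*=-\hat{\Phi}$, and because $\mathfrak{E}$ is an isometry between $(\partial_\gamma H^{3/2}(\Gamma;\mathbb{C}),(\cdot,\cdot)_\Lambda)$ and $\mathcal{E}^{\mathbb{C}}$, the similarity $H=-\mathfrak{E}^{-1}\hat{\Phi}\mathfrak{E}$ preserves this property.

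For the eigenstructure, I would note that the bijection $f\leftrightarrow\nabla u^f$ matches eigenvectors of $H$ to eigenvectors of $\hat{\Phi}$ with opposite-sign eigenvalue. Hence the analysis already carried out for $\hat{\Phi}$ directly yields: (i) the zero eigenspace of $H$ on the $L_2$-extension consists exactly of constants, because $\hat{\Phi}\nabla u^f=0\Rightarrow\nabla u^f=0$ was proved in the text, so no non-constant function can lie in $\ker H$ (while constants are in the kernel of $\partial_\gamma$); (ii) the eigenspaces at $\pm i$ correspond to gradients of (anti-)holomorphic functions, which via $\mathfrak{E}$ become the zero-mean traces of functions in $\mathscr{A}(M)$ and $\overline{\mathscr{A}(M)}$; (iii) the remaining eigenvalues $\lambda_{\pm k}=i\mu_{\pm k}$ inherit their total multiplicity $2\mathfrak{g}$ and the constraint $\mu_{-k}=-\mu_k$ from the preceding derivation.

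Smoothness of the non-trivial eigenfunctions $\eta_{\pm k}$ follows from the Helmholtz-type decomposition (\ref{Helmholtz decomposition 0}): the vector $\nabla u^{\eta_{\pm k}}$ is expressible in terms of fields in $\mathcal{D}^{\mathbb{C}}$, and elements of $\mathcal{D}$ are smooth up to the boundary by the elliptic regularity theorems invoked after (\ref{dim of defect space}); this regularity passes to $u^{\eta_{\pm k}}$ and hence to its trace $\eta_{\pm k}$. The conjugation symmetry $\eta_{\pm k}=\overline{\eta_{\mp k}}$ is a consequence of the fact that $\Lambda$ and $\partial_\gamma$ are real operators, so $H$ commutes with complex conjugation, and eigenvectors pair up with their conjugates carrying conjugate (hence, on the imaginary axis, opposite) eigenvalues $\overline{\lambda_k}=-\lambda_k=\lambda_{-k}$.

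The most delicate point will be the promotion of the $\pm i$-eigenspaces from the space of smooth-up-to-boundary traces to their $H^{1/2}$-closures: one must verify that these closures exhaust the kernels of $H\pm i$ on $L_2^{\mathbb{C}}(\Gamma)$, not merely lie inside them. This is a Hardy-space-style density statement that I expect to follow from boundary regularity of harmonic extensions combined with standard approximation of holomorphic $H^{1/2}$-traces by smooth holomorphic traces. The other spectral claims, once set up, are essentially bookkeeping transport of structure through $\mathfrak{E}$.
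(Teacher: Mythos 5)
Your proposal is essentially correct, and for most of the lemma it follows the same route as the paper: the paper's actual proof consists of the single sentence ``it remains to check that $\eta_{\pm k}\in C^\infty$,'' everything else being regarded (as you also treat it) as bookkeeping transport of the already-established spectral analysis of $\hat\Phi$ through the isometry $\mathfrak{E}$. Where you genuinely diverge is the smoothness of the eigenfunctions. The paper argues via pseudodifferential calculus: since $\Lambda=|\partial_\gamma|$ modulo smoothing, $H+H^{-1}$ is a smoothing operator, and $(H+H^{-1})\eta_{\pm k}=(\lambda_{\pm k}+\lambda_{\pm k}^{-1})\eta_{\pm k}$ with $\lambda_{\pm k}+\lambda_{\pm k}^{-1}\ne 0$ because $\lambda_{\pm k}\ne\pm i$; hence $\eta_{\pm k}$ lies in the range of a smoothing operator. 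Your argument instead exploits the decomposition (\ref{Helmholtz decomposition 0}): if $\hat\Phi\nabla u^f=\lambda\nabla u^f$ with $\lambda\ne 0,\pm i$, then applying $\Phi$ to $\Phi\nabla u^f=\lambda\nabla u^f+A$ gives $(1+\lambda^2)\nabla u^f=-\lambda A-\Phi A$ with $A\in\mathcal{D}^{\mathbb{C}}$, $\Phi A\in\mathcal{N}^{\mathbb{C}}$ both smooth up to $\Gamma$ by elliptic regularity, so $\nabla u^f$ and hence its trace are smooth; this is valid, more self-contained (it avoids invoking the Lee--Uhlmann symbol computation for $\Lambda$), and it would be worth writing out the algebraic step above explicitly since ``expressible in terms of fields in $\mathcal{D}^{\mathbb{C}}$'' as stated is a little loose. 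Finally, you are right that the identification of $\mathrm{Ker}(H\pm i)$ with the $H^{1/2}$-closures of smooth-up-to-boundary holomorphic (antiholomorphic) traces requires a density statement; the paper does not address this either, so your explicit flagging of it is if anything more careful than the original.
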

\begin{proof}
It remains to check that $\eta_{\pm k}\in C^{\infty}(\Gamma;\mathbb{C})$. To this end, recall that $H=|\partial_\gamma|^{-1}\partial_\gamma$ and $H^{-1}=\partial_\gamma^{-1}|\partial_\gamma|=-H$ modulo smoothing operators. Thus, the claim follows from the equality $0\ne (\lambda_{\pm k}+\lambda_{\pm k}^{-1})\eta_{\pm k}=(H+H^{-1})\eta_{\pm k}$.
\end{proof}

\paragraph{Double cover.} The double of $(M,g)$ is the surface $(\mathbb{M},{\rm g})$ obtained by gluing $(M,g)$ with its copy (endowed with the opposite orientation) along the boundary. In the subsequent, we consider $M$ to be embedded into $\mathbb{M}$. Introduce the involution $\tau$ on $\mathbb{M}$ which interchanges any point $x$ of $M$ with the same point on its copy. Then $\tau(x)=x$ if and only if $x\in\Gamma$. The projection $\pi: \ \mathbb{M}\mapsto M$ defined by $\pi(x):=\pi(\tau(x)):=x$ for any $x\in M\subset\mathbb{M}$ is continuous, open and discrete, and the set of its ramification points coincides with $\Gamma$. 

The metric ${\rm g}$, the rotation ${\bm \Phi}$, and the Hodge operator ${\bm \star}$ on $\mathbb{M}$ are obtained by gluing together the corresponding metrics, rotations, e.t.c., on $M$ and its copy. By construction, the metric is symmetric 
$$\tau^*{\rm g}={\rm g}=\pi^*g,$$
while the rotation and the Hodge operator are anti-symmetric
\begin{equation}
\label{anticommutaion involution and rotation}
d\tau\circ{\bm\Phi}=-{\bm\Phi}\circ d\tau, \quad \tau^*\circ{\bm \star}=-{\bm \star}\circ\tau^*
\end{equation}
with respect to the involution $\tau$. To check that ${\bm\Phi}$ and ${\bm\star}$ are correctly defined (i.e. they are continuous on the whole $\mathbb{M}$, including $\Gamma$), it is sufficient to note that $d\tau(\gamma)=\gamma$ and $d\tau(\nu)=-\nu$, whence $({\bm \Phi}\nu)|_{\Gamma_-}=-({\bm \Phi}\circ d\tau(\nu))|_{\Gamma_-}=d\tau(({\bm \Phi}\nu)|_{\Gamma_+})=d\tau(\gamma)=\gamma=\Phi\nu=({\bm \Phi}\nu)|_{\Gamma_+}$, where $\Gamma_+$ and $\Gamma_-$ denotes the sides of $\Gamma$ internal and external with respect to $M$, respectively. In particular, $\mathbb{M}$ is orientable. 

Note that, although the metric ${\rm g}$ is, in general, only Lipschitz continuous on $\Gamma$, the rotation and the Hodge operator are smooth on the whole $\mathbb{M}$. Moreover, $\mathbb{M}$ is endowed with the complex structure compatible with the complex structures on $M$ and its copy (i.e., the latter can be considered as complex submanifolds of $\mathbb{M}$). To make sure of this, it is sufficient to construct appropriate holomorphic charts in the neighbourhood of $\Gamma$. Let $x_0$ be an arbitrary point of $\Gamma$ and let $u$ be a smooth harmonic function in $M$ obeying $\partial_\nu u=0$ and $u(x_0)=0$, $\partial_\gamma u(x_0)>0$. In view of the Poincar\'e lemma, we have $\Phi\nabla u=\nabla v$ in some (simple connected) neighborhood $U$ of $x_0$ in $M$. In particular $\partial_\gamma v=\partial_\nu u=0$ in $\Gamma\cap U$ and one can assume that $v=0$ on $\Gamma\cap U$. Then $w=u+iv$ is holomorphic in $U$ and real-valued in $\Gamma\cap U$. Since $x_0$ is a simple zero of $w$ and $\partial_\nu v(x_0)=-\partial_\gamma u(x_0)<0$, one can assume, by decreasing the diameter of $U$, that $w: \ U\mapsto\mathbb{C}_+$ is an injection. Now we extend $w$ on $U\cup\tau(U)$ by symmetry $w\circ\tau=\overline{w}$; then $w: \ U\mapsto\mathbb{C}$ is an injection and $w$ is holomorphic on $\tau(U)$ in view of (\ref{anticommutaion involution and rotation}). Thus, $(U,w)$ is a holomorphic chart on $\mathbb{M}$ which is compatible with the complex atlases of $M$ and its copy $\tau(M)$.

Note that, if the function $w$ is holomorphic in a domain $U\subset\mathbb{M}$, then $w^\dag=\overline{w\circ\tau}$ is holomorphic in $\tau(U)$ due to (\ref{anticommutaion involution and rotation}).

\paragraph{Abelian differentials.} A (complex) 1-form $\omega$ on $\mathbb{M}$ is called an Abelian differential (of the first kind) if the equations
\begin{equation}
\label{Abelian differential def}
i{\bm\star}\omega=\omega, \quad d\omega=0
\end{equation}
hold in $\mathbb{M}$ or, equivalently, if it can be locally (in any simple connected neighborhood) represented as $\omega=dw$, where $w$ is a holomorphic function. The space $H^0(\mathbb{M};K)$ of Abelian differentials of the first kind has the complex dimension ${\rm dim}H^0(\mathbb{M};K)={\rm gen}(\mathbb{M})=2\mathfrak{g}$. 

In view of (\ref{anticommutaion involution and rotation}), the map
$$\omega\mapsto\omega^\dag:=\overline{\tau^*(\omega)}$$
preserves equations (\ref{Abelian differential def}) and therefore it is an involution on $H^0(\mathbb{M};K)$. We call that $\omega\in H^0(\mathbb{M};K)$ is symmetric and write $w\in H^1_{sym}(\mathbb{M};K)$ if $\omega^\dag=\omega$. Then $H^0_{sym}(\mathbb{M};K)$ is a real linear space of dimension $\mathfrak{g}$ and any $\omega\in H^0(\mathbb{M};K)$ admits the decomposition $\omega=\omega_++i\omega_-$, where $\omega_+=(\omega+\omega^\dag)/2$ and $\omega_-=(\omega-\omega^\dag)/2i$ belong to $H^0_{sym}(\mathbb{M};K)$.

The important observation used in the paper is the following connection between the tangent harmonic fields on $M$ and the symmetric Abelian differentials on its double $\mathbb{M}$. 
\begin{lemma}
\label{Connection between normal fields and symmetric differentials}
$A\in\mathcal{D}$ if and only if $(A+i\Phi A)^\flat$ is a restriction on $M$ of a symmetric Abelian differential $\omega_A$ on $\mathbb{M}$. The map $A\mapsto\omega_A$ is a bijection from $\mathcal{D}$ onto $H^1_{sym}(\mathbb{M};K)$.
\end{lemma}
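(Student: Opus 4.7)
The plan is to prove both implications by direct construction and then read off the bijection. First, for \emph{any} harmonic field $A\in\mathcal{H}$, the 1-form $\alpha_A:=(A+i\Phi A)^\flat$ on $M$ already satisfies the Abelian-differential equations (\ref{Abelian differential def}) there. Indeed, the identity $\star X^\flat=(\Phi X)^\flat$ together with $\Phi^2=-I$ yields $\star\alpha_A=(\Phi A)^\flat-i A^\flat=-i\alpha_A$, i.e.\ $i\star\alpha_A=\alpha_A$; and closedness $d\alpha_A=0$ is equivalent to $dA^\flat=d(\Phi A)^\flat=0$, which in turn is equivalent to ${\rm div}A={\rm div}(\Phi A)=0$, i.e.\ to $A$ being a harmonic field. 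Thus the whole question reduces to determining for which $A$ the locally defined Abelian differential $\alpha_A$ extends smoothly across $\Gamma$ to all of $\mathrm{M}$ as a symmetric object.

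For the forward direction, suppose $A\in\mathcal{D}$. The excerpt already notes that in a tubular neighborhood of $\Gamma$ one has $A=\nabla u$ with $u$ harmonic and $\partial_\nu u=A_\nu=0$. Choose a local harmonic conjugate $v$; since $\partial_\gamma v=\partial_\nu u=0$ on $\Gamma$, we may normalize $v$ so that $v|_\Gamma=0$. Then $\alpha_A=d(u+iv)=dw$ with $w$ holomorphic in a one-sided neighborhood of $\Gamma$ and real-valued on $\Gamma$ — precisely the Schwarz-reflection setup used in the ``Double cover'' paragraph of the Preliminaries to construct holomorphic charts on $\mathrm{M}$ at boundary points. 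The reflection $w\circ\tau=\overline{w}$ gives a smooth holomorphic extension across $\Gamma$, so the local forms $dw$ patch to a global $\omega_A\in H^0(\mathrm{M};K)$; differentiating $w\circ\tau=\overline{w}$ gives $\tau^*\omega_A=\overline{\omega_A}$, i.e.\ $\omega_A^{\dag}=\omega_A$.

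For the converse, given $\omega\in H^0_{sym}(\mathrm{M};K)$, set $A^\flat:=\Re(\omega|_M)$. The condition $i{\bm\star}\omega=\omega$ forces $\Im\omega=\star\Re\omega=(\Phi A)^\flat$, so $\omega|_M=\alpha_A$; splitting the closedness of $\omega$ into real and imaginary parts yields both divergence-free conditions, so $A$ is a harmonic field. To see $A\in\mathcal{D}$, use $\omega^{\dag}=\omega$, i.e.\ $\tau^*\omega=\overline{\omega}$, and evaluate both sides at $x\in\Gamma$ on the vector $\nu$. Since $d\tau(\nu)=-\nu$ at fixed points of $\tau$, this gives $-\omega(\nu)=\overline{\omega(\nu)}$, hence $\omega(\nu)\in i\mathbb{R}$ and $A_\nu=\Re\omega(\nu)=0$. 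The two constructions $A\mapsto\omega_A$ and $\omega\mapsto(\Re\omega|_M)^\sharp$ are then manifestly inverse to each other, and injectivity is just the uniqueness of the real/imaginary decomposition of a complex 1-form; as a sanity check, ${\rm dim}_\mathbb{R}\mathcal{D}=2\mathfrak{g}$ by (\ref{dim of defect space}) matches ${\rm dim}_\mathbb{R}H^0_{sym}(\mathrm{M};K)={\rm dim}_\mathbb{C}H^0(\mathrm{M};K)=2\mathfrak{g}$. The only genuine technical point is the smoothness of the reflection extension in the forward direction, and I expect this to be the main obstacle in a less well-prepared setting; here it is settled by the very same Schwarz argument already used in the Preliminaries to build holomorphic charts on $\mathrm{M}$ across $\Gamma$.
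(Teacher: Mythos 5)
Your proposal is correct and follows essentially the same route as the paper's proof: local primitives $w=u+iv$ with $v|_\Gamma=0$ and Schwarz reflection in the boundary charts of $\mathrm{M}$ for the forward direction, and the pointwise identities $\Im\omega=\star\Re\omega$ and $\omega(d\tau(\nu))=\overline{\omega(\nu)}$ with $d\tau(\nu)=-\nu$ for the converse. The only cosmetic difference is that you first check the equations (\ref{Abelian differential def}) for $(A+i\Phi A)^\flat$ on $M$ abstractly before treating the extension across $\Gamma$, whereas the paper works with the local primitives throughout; your dimension count $2\mathfrak{g}$ is also the correct one (the paper's closing remark writes $g$ where $2\mathfrak{g}$ is meant).
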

\begin{proof}
Let $A\in\mathcal{D}$ and let $\omega_A$ be the $1$-form on $\mathbb{M}$ given by $\omega_A:=(A+i\Phi A)^\flat$ on $M$ and extended to $\tau(M)$ by symmetry $\omega_A^\dag=\omega_A$. Let $U$ be a simple connected neighborhood in $M$; since $A$ and $\Phi A$ are harmonic, they can be represented as $A=\nabla u$, $\Phi A=\nabla v$ in $U$ and the function $w=u+iv$ is holomorphic in $U$. Then $\omega_A=(\nabla w)^\flat=dw$ in $U$. By symmetry $\omega_A^\dag=\omega_A$, we have $\omega_A=\overline{\tau^*dw}=d\overline{w\circ\tau}=dw^\dag$ in $\tau(U)$, where $w^\dag$ is holomorphic in $\tau(U)$. If $\Gamma\cap U$ is a segment of non-zero length, then $0=A_\nu=\partial_\nu u=\partial_\gamma v$ and one can chose $v$ in such a way that $v=0$ on $U\cap\Gamma$. Then $w|_{\Gamma_+}(x)=u|_{\Gamma_+}(x)=u\circ\tau|_{\Gamma_+}(x)=u|_{\Gamma_-}(x)=w^\dag|_{\Gamma_-}(x)$ for $x\in\Gamma\cap U$ and, due to the Schwarz reflection principle, $w$ admits holomorphic extension (still denoted by $w$) on $U\cup\tau(U)$ which coincides with $w^\dag$ on $\tau(U)$. Therefore $\omega$ admits the representation $\omega=dw$ with holomorphic $w$ in any simple connected neighborhood in ${\rm M}$ and, hence, $\omega\in H^1_{sym}(\mathbb{M};K)$. The map $A\mapsto\omega_A$ is an injection due to the uniqueness of the analytic continuation. 

Now, suppose that $\omega\in H^1_{sym}(\mathbb{M};K)$ and $\omega^\sharp=A+iB$. Then $A$,$B$ are harmonic since ${\rm div}(A+iB)={\bm\star}d{\bm\star}\omega=-i{\bm\star}d\omega=0$ and ${\rm div}(\Phi(A+iB))=-{\bm\star}d\omega=0$. An addition, $A+iB=\omega^\sharp=(i{\bm\star}\omega)^\sharp=i(\Phi A+i\Phi B)=-\Phi B+i\Phi A$, whence $B=\Phi A$ and $\omega=(A+i\Phi A)^\flat$. Finally, $\omega(\nu)=\omega^\dag(\nu)=\overline{\omega(d\tau(\nu))}=-\overline{\omega(-\nu)}$, whence $A_\nu=\Re\omega(\nu)=-\Re\omega(\nu)=0$. Therefore, $A\in\mathcal{D}$ and $\omega=\omega_A$. This means that the map $A\mapsto\omega_A$ is a surjection. As a corollary, we have ${\rm dim}\mathcal{D}={\rm dim}H^1_{sym}(\mathbb{M};K)=g$ which explains formula (\ref{dim of defect space}).
\end{proof}

\paragraph{Homology groups.} Let $(X,g)$ be an oriented surface (possibly with non-empty boundary) of genus $m$ and let $l$ be a finite (possibly empty) collection of closed oriented curves in $M$. By definition, the following operations preserve homology class (`cycle') $[l]$ of $l$: a) a homotopic deformation of each curve in $l$, b) cutting the curves into a finite number of segments and gluing them together in a different order in such a way that the resulting curves are closed and the orientation of each segment is preserved, c) adding or excluding an oriented boundary of some (arbitrarily oriented) domain in $X$. The set of cycles endowed with the addition $[l]+[l']=[l\cup l']$ is an Abelian group $H_1(X,\mathbb{Z})$ called the first homology group. Note $-[l]=[-l]$, where $-l$ is obtained from $l$ by reversing the orientation of all curves. It is well known that $H_1(X,\mathbb{Z})\simeq \pi_1(X^\circ)/[\pi_1(X^\circ),\pi_1(X^\circ)]\simeq\mathbb{Z}^{2m}$, where $X^\circ$ is obtain from $X$ by attaching disks to all connected components of $\partial X$. 

Let $l,l'$ be closed oriented curves in $X$; by homotopic deformation one can assume that they are smooth, oriented by unit tangent vectors $\gamma,\gamma'$, respectively, $l$ intersect $l'$ a finite number of times, and each intersection is transversal. The intersection is positive if $(\gamma,\gamma')$ is positively oriented with respect to the orientation of $X$, and negative otherwise. By definition, the {\it intersection number} $[l]\sharp[l']$ is the difference between numbers of positive and negative intersections of $l$ and $l'$ (if $l,l'$ are collections of the curves, then $[l]\sharp[l']$ is obtained by the summation of the intersection numbers of all pairs from $l\times l'$). It can be shown that $[l]\sharp[l']$ is invariant with respect to operations a)-c) and thereby is well defined on homology classes. Moreover, $\sharp: \ H_1(X,\mathbb{Z})\times H_1(X,\mathbb{Z})\mapsto\mathbb{Z}$ is an alternating bilinear form.

We say that $[l_\cdot]=\{[l_1],\dots,[l_{2m}]\}$ form a {\it homology basis} on $X$ if they generate $H_1(X,\mathbb{Z})$. Introduce the intersection matrix $J$ of the basis $[l_\cdot]$ by $J_{ij}:=[l_i]\sharp\,[l_j]$. The homology basis is called {\it canonical} if its intersection matrix coincides with the standard symplectic matrix
$$\Omega_{(m)}=\left(\begin{array}{cc} 0 & I_{m}\\ -I_{m} & 0\end{array}\right).$$
In this case we call that $a_1=[l_1],\dots,a_m=[l_m]$ are $a$-cycles and $b_1=[l_{m+1}],\dots,b_m=[l_{2m}]$ are $b$-cycles. The canonical bases always exist. Two homology bases $[l_\cdot]$, $[l'_\cdot]$ are simultaneously (non-)canonical if and only if $[l'_i]=\sum_j M_{ij}[l_j]$ ($i=1,\dots,m$), where $M\in{\rm Sp}(m,\mathbb{Z})$. A compact Riemann surface $X$ with empty boundary endowed with a choice of a canonical homology basis $[l_\cdot]$ is called a {\it Torelli marked surface}.

Let $\partial X$ be empty or diffeomorphic to a circle, let $\omega$ be a harmonic 1-form on $X$, and let $A=\omega^\sharp$. The integral
$$T(\omega|[l])\equiv T(A|[l]):=\int_{l}\omega=\int_{l}g(A,\gamma)dl$$
(where $\gamma$ and $dl$ are tangent unit vector and the length element on $l$, respectively) depends only on $[l]$; this integral is called the {\it period} of $\omega$ (or of $A$) along the cycle $[l]$. 

We say that $\omega$ is normal (tangent) to $\partial X$ if $\omega(\gamma)=0$ ($\omega(\nu)=0$) on $\partial X$, or, equivalently, if $A=\omega^\sharp$ is normal (tangent) to $\partial X$. Then harmonic 1-forms $\omega$ normal (tangent) to $\partial X$ are determined by their period vectors
$$\mathrm{T}(\omega|[l_\cdot])\equiv\mathrm{T}(\omega|[l_\cdot]):=(T(\omega|[l_1],\dots,T(\omega|[l_m])^T$$
with respect to a given homology basis $[l_\cdot]$.
\begin{lemma}
\label{inner product via periods lemma}
Let $(X,g)$ be an orientable surface of genus $m$ {\rm (}possibly with non-empty boundary{\rm)}, let $\Phi$,$\star$ be the rotation and the Hodge operator on $X$, respectively, and let $[l_\cdot]$ be a homology basis on $X$. Let $A,B\in C^\infty(X;TX)$ satisfy ${\rm div}A={\rm div}(\Phi B)=0$ in $X$, and let $A$ be tangent or $B$ is normal to $\partial X$. Then their inner product in $L_2(X;TX)$ admits the representation 
\begin{align}
\label{inner product via periods}
(B,A)=\mathrm{T}(B|\,[l_\cdot])^T J^{-1}\mathrm{T}(-\Phi A|\,[l_\cdot]),
\end{align}
where $J$ is the intersection matrix of $[l_\cdot]$.
\end{lemma}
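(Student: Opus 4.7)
The plan is to recast the inner product as an integral of a wedge product of closed $1$-forms on $X$ and then invoke a Riemann bilinear identity on a polygonal model of $X$. Writing $\beta := B^\flat$ and $\alpha := (\Phi A)^\flat = \star A^\flat$, the identity $g(B,A)\,dS = B^\flat\wedge\star A^\flat$ gives
\begin{equation*}
(B,A) = \int_X \beta\wedge\alpha.
\end{equation*}
The divergence hypotheses translate, via $\mathrm{div}\,V = \star d\star V^\flat$ and $\star\star = -\mathrm{Id}$ on $1$-forms, into $d\alpha = 0$ (from $\mathrm{div}\,A = 0$) and $d\beta = 0$ (from $\mathrm{div}(\Phi B) = 0$), so both $\alpha$ and $\beta$ are closed.

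Next I would reduce to a canonical basis. Under a $\mathbb{Z}$-linear change of basis $[l'_i] = \sum_j M_{ij}[l_j]$, the period vectors transform as $\mathrm{T}(\cdot|[l'_\cdot]) = M\,\mathrm{T}(\cdot|[l_\cdot])$ while the intersection matrix transforms as $J' = MJM^T$, from which one checks directly that the right-hand side of (\ref{inner product via periods}) is basis-independent; the left-hand side is manifestly intrinsic. Hence it suffices to prove the formula for a canonical basis $\{a_1,\ldots,a_m,b_1,\ldots,b_m\}$ with $J = \Omega_{(m)}$, in which case the right-hand side reduces to $\sum_{i=1}^m (A_i^\beta B_i^\alpha - B_i^\beta A_i^\alpha)$, where $A_i^\beta := \int_{a_i}\beta$ and so on.

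For the canonical case I would cut $X$ along the chosen cycles, arranged to share a common base point placed on $\partial X$ when $\partial X \ne \varnothing$, to obtain a simply connected polygon $P$ with boundary word $a_1 b_1 a_1^{-1} b_1^{-1}\cdots a_m b_m a_m^{-1} b_m^{-1}$ (supplemented by $\partial X$ when non-empty). Since $\beta$ is closed and $P$ is simply connected, there is $f\in C^\infty(P)$ with $df = \beta$; Stokes' theorem then yields
\begin{equation*}
\int_X \beta\wedge\alpha = \int_P d(f\alpha) = \int_{\partial P} f\alpha.
\end{equation*}
The standard Riemann bilinear computation---using that the jumps of $f$ across identified pairs $a_i^\pm$ and $b_i^\pm$ equal $B_i^\beta$ and $-A_i^\beta$ respectively---evaluates the contribution of the paired polygon sides to $\sum_i (A_i^\beta B_i^\alpha - B_i^\beta A_i^\alpha)$, and a short bookkeeping check rewrites this as $\mathrm{T}(B|[l_\cdot])^T \Omega_{(m)}^{-1}\mathrm{T}(-\Phi A|[l_\cdot])$.

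The step I expect to need the most care is verifying that the remaining contribution $\int_{\partial X} f\alpha$ vanishes under each alternative boundary hypothesis. If $A$ is tangent to $\partial X$ then $\Phi A$ is normal, so the pullback of $\alpha = (\Phi A)^\flat$ to $\partial X$ vanishes pointwise. If instead $B$ is normal to $\partial X$, then $\beta|_{\partial X} = 0$ forces $f$ to be constant on the (connected) boundary circle; since $d\alpha = 0$ and $\partial X$ bounds $X$, Stokes gives $\int_{\partial X}\alpha = \int_X d\alpha = 0$, so the constant times this period vanishes. Combining the contributions gives the identity in the canonical case, and basis-invariance then delivers (\ref{inner product via periods}) in general. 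The only genuine risk beyond this boundary-term analysis is tracking signs in the bilinear relation; no new machinery is required.
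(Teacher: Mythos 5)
Your argument is correct, and its first half (rewriting $(B,A)=\int_X\beta\wedge\alpha$ with $\beta=B^\flat$, $\alpha=\star A^\flat$ closed, then using the transformation rules $\mathrm{T}(\cdot|[l'_\cdot])=M\,\mathrm{T}(\cdot|[l_\cdot])$, $J'=MJM^T$ to reduce to a canonical basis) coincides with the paper's. Where you genuinely diverge is in establishing the Riemann bilinear identity when $\partial X\ne\varnothing$: the paper caps the boundary circle with a disk to form a closed surface $X^\circ$, extends $\omega$ by $d(\chi_\varepsilon u_1)$ with a cutoff so that the extension stays closed while its $L_2$-norm on the added annulus is $O(\varepsilon^{1/2})$, invokes the classical bilinear identity on $X^\circ$ (using $H_1(X^\circ,\mathbb{Z})=H_1(X,\mathbb{Z})$), and passes to the limit $\varepsilon\to 0$; you instead cut $X$ along the canonical curves into a polygon, take a primitive $f$ of $\beta$, and handle the extra term $\int_{\partial X}f\alpha$ directly — it vanishes either because $\alpha$ pulls back to zero on $\partial X$ (when $A$ is tangent) or because $f$ is constant there and $\int_{\partial X}\alpha=\int_X d\alpha=0$ (when $B$ is normal). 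Your route is more elementary and avoids the limiting argument, at the mild cost of having to realize the canonical homology basis by a geometric system of curves based at a point of $\partial X$ that cuts $X$ into a disk (standard, and harmless since periods depend only on homology classes); the paper's route treats the two boundary hypotheses symmetrically by reducing to the closed case, which it can then quote as known. Your sign bookkeeping, $J^{-1}=-\Omega_{(m)}$ turning the right-hand side into $\sum_i\bigl(\int_{a_i}\beta\int_{b_i}\alpha-\int_{b_i}\beta\int_{a_i}\alpha\bigr)$, matches the paper's identity (\ref{Riemann bilinear identity}).
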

\begin{proof}
Denote $\omega=B^\flat$ and $\eta={\star}A^\flat$, then $d\omega=d\eta=0$ and at least one of $A,B$ is normal to $\partial X$. The left-hand side of (\ref{inner product via periods}) can be rewritten as $(B,A)=\int_X \omega^\wedge\eta$ while the right-hand side is given by $\mathrm{T}(\omega|\,[l_\cdot])^T(-J^{-1})\mathrm{T}(\eta|\,[l_\cdot])$. Let us show that the right-hand side is independent of the choice of a homology basis. Let $[l'_\cdot]$ be a new homology basis connected with ${l_\cdot}$ via $[l'_i]=\sum_{ij}M_{ij}[l_j]$ (i.e., $M,M^{-1}$ have integer entries). Then the period vectors and the intersection matrices obey the transformation rules ${\rm T}(\cdot|[l'_\cdot])=M{\rm T}(\cdot|[l_\cdot])$ and 
\begin{equation}
\label{transformation rule for intersection matrix}
J'=MJM^T,
\end{equation}
whence $\mathrm{T}(\omega|\,[l'_\cdot])^T(-J^{'-1})\mathrm{T}(\eta|\,[l'_\cdot])=\mathrm{T}(\omega|\,[l_\cdot])^T(-J^{-1})\mathrm{T}(\eta|\,[l_\cdot])$. Thus, one can check (\ref{inner product via periods}) assuming that $[l_\cdot]=\{a_1,\dots,a_m,b_1,\dots,b_m\}$ is canonical. Then (\ref{inner product via periods}) takes the familiar form
\begin{equation}
\label{Riemann bilinear identity}
\int_X \omega^\wedge\eta=\sum_{j=1}^m\Big(\int_{a_j}\omega\int_{b_j}\eta-\int_{b_j}\omega\int_{a_j}\eta\Big),
\end{equation}
which is just the Riemann bilinear identity if $\partial X=\varnothing$. It remains to prove that (\ref{Riemann bilinear identity}) remains valid if $\partial X\ne\varnothing$ and one of $\omega,\eta$ is normal to $\partial X$. Let $X^\circ$ be Riemann surface obtained by attaching a disk $\mathcal{D}=\{z\in\mathbb{C} \ | \ |z|\le 1\}$ to each connected component of $\partial X$ (to construct the complex charts near $\partial X\subset X^\circ$, one can use the procedure described after (\ref{anticommutaion involution and rotation})). Let $\omega_\circ,\eta_\circ$ be smooth extensions to $X^\circ$ of $\omega,\eta$, respectively, given by $\omega^\circ=du_1$, $\eta^\circ=du_2$, where $u_k$ are smooth on $\tilde{X}=X^\circ\backslash{\rm int}X$. Denote by $\chi$ the smooth function with compact support on $[0,+\infty)$ equal to $1$ in the neighborhood of zero. Introduce the function $\chi_\varepsilon$ given by $\chi_\varepsilon(x)=\chi(\varepsilon^{-1}(|z(x)|-1))$ on each disk in $\tilde{X}$. Suppose that $\omega$ is normal to $\partial X$; then one can chose $u_1$ in such a way that $u_1=0$ on $\partial X$. Let $\omega_\varepsilon$ be the smooth closed extension of $\omega$ given by $\omega_\varepsilon=d(\chi_\varepsilon u_1)$ on $\tilde{X}$. Since $u(z)=O(1-|z|)=O(\varepsilon)$ on the support of $\chi_\varepsilon$, we have $\|\omega_\varepsilon\|_{L_2(\tilde{X}\,;T^*\tilde{X})}=O(\varepsilon^{1/2})$, whence 
\begin{equation}
\label{mollifier}
\int_{X_\circ} \omega_\varepsilon^\wedge\eta_\circ\to \int_X \omega^\wedge\eta \qquad (\varepsilon\to 0).
\end{equation}
Since each closed curve in $X^\circ$ can be homotopically deformed to a curve in $X\subset X^\circ$, we have $H_1(X^\circ,\mathbb{Z})=H_1(X,\mathbb{Z})$ and each homology class on $X^\circ$ is an extension of a homology class on $X$. Thus, formula (\ref{Riemann bilinear identity}) is valid with the left-hand side replaced by the left-hand side of $(\ref{mollifier})$. Now, formula (\ref{Riemann bilinear identity}) is obtained by the limit transition as $\varepsilon\to 0$.
\end{proof}
As easily follows from Lemma \ref{inner product via periods lemma} and (\ref{Abelian differential def}), Abelian differentials are determined by their $a$-periods.

Let $X=\mathbb{M}$. The involution $\dag$ acting on curves in $\mathbb{M}$ by the rule $l^\dag:=\tau\circ l$ induce the involution $\dag$ on $H_1(\mathbb{M},\mathbb{Z})$ obeying $[l]^\dag=[l^\dag]$. Since the involution $\tau$ is orientation reversing, we have
\begin{equation}
\label{involution intersection}
[l]^\dag\sharp[l']^\dag=-[l]\sharp[l'] \qquad ([l],[l']\in H_1(\mathbb{M},\mathbb{Z})).
\end{equation}
Note that
\begin{equation}
\label{involution of periods}
T(\omega^\dag|\,[l]^\dag)=\int_{\tau\circ l}\overline{\tau^*\omega}=\overline{\int_{l}\omega}=\overline{T(\omega|\,[l])} \qquad (\omega\in H^0(\mathbb{M};k)).
\end{equation}
Since $\partial M=\Gamma$ consists of one connected component, each homology class $[l]$ in $\mathbb{M}$ admits the decomposition $[l]=[l_+]+[l_-]^\dag$, where $l_\pm$ are collections of the curves in $M$. In particular, we have $H_1(\mathbb{M},\mathbb{Z})=H_1(M,\mathbb{Z})+H_1(M,\mathbb{Z})^\dag\simeq 2H_1(M,\mathbb{Z})$. Due to this facts and (\ref{involution intersection}), any homology basis $a_1,\dots,a_{\mathfrak{g}},b_1,\dots,b_{\mathfrak{g}}$ in $H_1(M,\mathbb{Z})$ defines the canonical homology basis
\begin{equation}
\label{symmetric canonical basis}
a_1,\dots,a_{\mathfrak{g}}, \ a_{\mathfrak{g}+1}:=a^\dag_1,\dots,a_{2\mathfrak{g}}:=a^\dag_{\mathfrak{g}}, \ b_1,\dots,b_{\mathfrak{g}}, \ b_{\mathfrak{g}+1}:=-b_1^\dag,\dots,b_{2\mathfrak{g}}:=-b^\dag_{\mathfrak{g}}.
\end{equation}
in $H_1(\mathbb{M},\mathbb{Z})$. In what follows, a homology basis of the form (\ref{symmetric canonical basis}) is called {\it symmetric}.

\paragraph{Period matrices.} Consider a Torelli marked Riemann surface $(X,[l_\cdot])$ of genus $m$ (here $[l_\cdot]=\{a_1,\dots,a_{m},b_1,\dots,b_{m}\}$). For a basis $\omega_\cdot=\{\omega_1,\dots,\omega_{\mathrm{g}}\}$ in $H^0(X;K)$, we introduce its {\it period matrix} $\mathbb{T}([l_\cdot],\omega_\cdot)$ with entries $\mathbb{T}_{ij}([l_\cdot],\omega_\cdot):=T(\omega_i|[l_j])$. There is the unique basis $\omega_\cdot$ whose period matrix is of the form 
$$\mathbb{T}([l_\cdot],\omega_\cdot)=(I_{m}|\mathbb{B});$$ 
this basis is called {\it dual} to $[l_\cdot]$ and the matrix $\mathbb{B}$ is called the $b$-{\it period matrix} of $(X,[l_\cdot])$. We say that a basis $\omega_\cdot$ in $H^0(X;K)$ is {\it canonical} if it is dual to some Torelli marking on $X$. Also we say that a matrix $\mathbb{B}$ is $b$-period matrix of $X$ if there is a Torelli marking $[l_\cdot]$ on $X$ such that $\mathbb{B}$ is a $b$-period matrix of $(X,[l_\cdot])$.

Let $X=\mathbb{M}$ and let $\omega_\cdot=\{\omega_1,\dots,\omega_{2\mathrm{g}}\}$ be a basis in $H^0(\mathbb{M};K)$. The basis $\omega_\cdot$ is called symmetric canonical if it is dual to some symmetric canonical homology basis $[l_\cdot]$ on $\mathbb{M}$. In this case, the $b$-period matrix $\mathbb{B}$ of $(X,[l_\cdot])$ is called symmetric.

Now, let $[l_\cdot]=\{[l_1],\dots,[l_{2\mathfrak{g}}]\}$ be a homology basis on $M$ and let $B_\cdot=\{B_1,\dots,B_{2\mathfrak{g}}\}$ be a basis in $\mathcal{N}$. We say that $B_\cdot$ is {\it dual} to $[l_\cdot]$ if $T(B_i|[l_j])=\delta_{ij}$; in this case, the matrix $\mathfrak{B}$ with the entries 
$$\mathfrak{B}_{ji}=T(\Phi B_i|[l_j])$$
is called the {\it auxiliary period matrix} corresponding to the homology basis $[l_\cdot]$. The basis dual to $[l_\cdot]$ exists and unique. Indeed, if $B'_\cdot=\{B'_1,\dots,B'_{2\mathfrak{g}}\}$ is a basis in $\mathcal{N}$, then the matrix $M$ with entries $M_{ij}=T(B_i|[l_j])$ is invertible (otherwise, there is the vector $0\ne B=\sum_i c_i B_i\in\mathcal{D}$ which is harmonic, normal to $\Gamma$ and has periods $T(B_i|[l_j])=\sum_i c_iM_{ij}=0$, a contradiction). Thus, the dual basis to $[l_\cdot]$ is composed of the vectors $B_i=\sum_j (M^{-1})_{ij}B'_j$.

We say that the basis $B_\cdot$ in $\mathcal{N}$ is dual if it is dual to some homology basis $[l_\cdot]$ on $M$; if, in addition, $[l_\cdot]$ is canonical, then we say that $B_\cdot$ is canonical. As follows from (\ref{transformation rule for intersection matrix}), two dual bases $B_\cdot$ and $B'_\cdot$ are simultaneously (non-)canonical if and only if $B'_i=\sum_j M_{ji}B_j$, where $M\in{\rm Sp}(\mathfrak{g},\mathbb{Z})$ is arbitrary; the corresponding homology bases are connected via $[l'_i]=\sum_j(M^{-1})_{ij}[l_j]$. Similarly, any two auxiliary period matrices $\mathfrak{B}$ and $\mathfrak{B}'$ (corresponding to different homology bases) are related via
\begin{equation}
\label{aux period matrix transformation rule}
\mathfrak{B}'=M^{-1}\mathfrak{B}M \qquad (M\in {\rm Sp}(\mathfrak{g},\mathbb{Z})).
\end{equation}

The following lemma provides the criterion of the canonicity of the dual basis. Also, it provides the expression for the auxiliary period matrix of a canonical homology basis in terms of inner products between elements of its dual basis.
\begin{lemma}
\label{canonicity criterion lemma}
\begin{enumerate}[label=\alph*{\rm)}]
\item The intersection matrix $J$ of the homology basis $[l_\cdot]=\{[l_1],\dots,[l_{2\mathfrak{g}}]\}$ in $M$ can expressed in terms of its dual basis $B_\cdot=\{B_1,\dots,B_{2\mathfrak{g}}\}$ as
\begin{equation}
\label{intersection form via alternating form}
(J^{-1})_{ij}=(B_i,\Phi B_j).
\end{equation}
\item The dual basis $B_\cdot$ is canonical if and only if $(\Phi B_i,B_j)=(\Omega_{(\mathfrak{g})})_{ij}$ for all $i,j=1,\dots,\mathfrak{g}$.
\item If $[l_\cdot]$ is a canonical homology basis in $M$, then its auxiliary period matrix can be expressed in terms of its dual basis $B_\cdot=\{B_1,\dots,B_{2\mathfrak{g}}\}$ as
$$(\Omega_{(\mathfrak{g})}\mathfrak{B})_{ij}=(B_i,B_j).$$
\end{enumerate}
\end{lemma}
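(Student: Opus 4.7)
The plan is to deduce all three statements from a single application of Lemma \ref{inner product via periods lemma} to pairs of elements of the dual basis, exploiting the defining identity $\mathrm{T}(B_i|\,[l_\cdot])=e_i$, where $e_i$ is the $i$-th coordinate vector in $\mathbb{R}^{2\mathfrak{g}}$. The hypotheses of that lemma are easy to verify in the present setting: every $B_i\in\mathcal{N}$ is harmonic, so both $B_i$ and $\Phi B_i$ are divergence-free, $B_i$ itself is normal to $\Gamma$, and $\Phi B_i\in\mathcal{D}$ is tangent to $\Gamma$. For part (a) I plug $A=\Phi B_j$ (tangent to $\Gamma$) and $B=B_i$ (divergence-free, with $\Phi B_i$ also divergence-free) into Lemma \ref{inner product via periods lemma}; using $\Phi^2=-I$ and $\mathrm{T}(B_i|\,[l_\cdot])=e_i$, this collapses to
$$(B_i,\Phi B_j)=e_i^T J^{-1}\mathrm{T}(-\Phi^2 B_j|\,[l_\cdot])=e_i^T J^{-1} e_j=(J^{-1})_{ij},$$
which is (\ref{intersection form via alternating form}).

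For part (b) I combine (a) with the skew-adjointness $\Phi^*=-\Phi$ in $L_2(M;TM)$, which gives $(\Phi B_i,B_j)=-(B_i,\Phi B_j)=-(J^{-1})_{ij}$. By definition, $B_\cdot$ is canonical precisely when $J=\Omega_{(\mathfrak{g})}$, and a direct block computation using $\Omega_{(\mathfrak{g})}^2=-I_{2\mathfrak{g}}$ gives $\Omega_{(\mathfrak{g})}^{-1}=-\Omega_{(\mathfrak{g})}$. Hence canonicity of $B_\cdot$ is equivalent to $(\Phi B_i,B_j)=-(J^{-1})_{ij}=(\Omega_{(\mathfrak{g})})_{ij}$ for all $i,j$, as claimed.

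For part (c) I apply Lemma \ref{inner product via periods lemma} directly to $(B_i,B_j)$, taking $A=B_j$ and $B=B_i$ (the normality of $B_i$ to $\Gamma$ supplies the required hypothesis, and the divergence conditions hold since $B_i,B_j$ are harmonic). Substituting $J=\Omega_{(\mathfrak{g})}$, $J^{-1}=-\Omega_{(\mathfrak{g})}$, $\mathrm{T}(B_i|\,[l_\cdot])=e_i$, and $\mathrm{T}(\Phi B_j|\,[l_\cdot])_k=\mathfrak{B}_{kj}$ (the $k$-th component of the period vector of $\Phi B_j$, by the definition of the auxiliary period matrix), one obtains
$$(B_i,B_j)=e_i^T(-\Omega_{(\mathfrak{g})})(-\mathrm{T}(\Phi B_j|\,[l_\cdot]))=e_i^T\Omega_{(\mathfrak{g})}\mathfrak{B}\,e_j=(\Omega_{(\mathfrak{g})}\mathfrak{B})_{ij}.$$
There is no serious obstacle here: the entire statement is a bookkeeping consequence of Lemma \ref{inner product via periods lemma} once one notices the collapse $\mathrm{T}(B_i|\,[l_\cdot])=e_i$. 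The only point requiring some attention is the sign produced by $\Omega_{(\mathfrak{g})}^{-1}=-\Omega_{(\mathfrak{g})}$, which is responsible both for the absence of a minus sign in (b) and for the factor of $\Omega_{(\mathfrak{g})}$ appearing on the correct side in (c).
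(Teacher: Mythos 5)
Your proposal is correct and follows essentially the same route as the paper: all three parts are read off from Lemma \ref{inner product via periods lemma} together with the duality relation $T(B_i|[l_j])=\delta_{ij}$, with part (b) obtained from part (a) via $\Phi^*=-\Phi$ and $\Omega_{(\mathfrak{g})}^{-1}=-\Omega_{(\mathfrak{g})}$ (a step the paper leaves implicit). The sign bookkeeping and the verification of the hypotheses of that lemma are handled correctly.
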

\begin{proof}
{\it a}) Since $B_k$ are normal and $T(B_k|[l_s])=\delta_{ks}$, Lemma \ref{inner product via periods lemma} and the equality $\Phi^2=-{\rm Id}$ imply
$$(B_i,\Phi B_j)=\mathrm{T}(B_i|\,[l_\cdot])^T J^{-1}\mathrm{T}(-\Phi^2 B_j|\,[l_\cdot])=\delta_{ik}(J^{-1})_{ks}\delta_{sj}=(J^{-1})_{ij}.$$
Thus, we have proved (\ref{intersection form via alternating form}). Now {\it b}) easily follows from {\it a}). {\it c}) In view of Lemma \ref{inner product via periods lemma} and the equality $J^{-1}=\Omega_{(\mathfrak{g})}^{-1}=-\Omega_{(\mathfrak{g})}$, we have
$$(B_i,B_j)=\mathrm{T}(B_i|\,[l_\cdot])^T\Omega_{(\mathfrak{g})}\mathrm{T}(\Phi B_j|\,[l_\cdot])=\delta_{ik}(\Omega_{(\mathfrak{g})})_{ks}\mathfrak{B}_{sj}=(\Omega_{(\mathfrak{g})}\mathfrak{B})_{ij}.$$
\end{proof}

Let $[l_\cdot]=\{a_1,\dots,a_{\mathfrak{g}},b_1,\dots,b_{\mathfrak{g}}\}$ be a canonical homology basis on $M$, let $B_\cdot$ be the corresponding dual basis in $\mathcal{N}$ and let $\mathfrak{B}$ be the corresponding auxiliary period matrix. Let us establish the connection between $\mathfrak{B}$ and the $b$-period matrix $\mathbb{B}$ of the cover $\mathbb{M}$ corresponding to the symmetric canonical basis $[\tilde{l}_\cdot]$ related to $[l_\cdot]$ via (\ref{symmetric canonical basis}). Denote
$$\omega_i=(iB_i-\Phi B_i)^\flat \qquad (i=1,\dots,2\mathfrak{g}).$$
As follows from Lemma \ref{Connection between normal fields and symmetric differentials} and the equality $\Phi\mathcal{N}=\mathcal{D}$, $\omega_i$ admit analytic continuation to symmetric abelian differentials on the double $\mathbb{M}$ (still denoted by $\omega_i=\omega_i^\dag$). Note that $\omega_1,\dots,\omega_{2\mathfrak{g}}$ constitute a basis in $H^1(\mathbb{M};K)$ due to the linear independence of $B_1,\dots,B_{2\mathfrak{g}}$. In addition,
$$T(\omega_i|a_j)=i\delta_{ij}-\mathfrak{B}_{ji}, \qquad T(\omega_i|b_j)=i\delta_{i,j+\mathfrak{g}}-\mathfrak{B}_{j+\mathfrak{g},i}$$
for $j\le\mathfrak{g}$. In view of (\ref{involution of periods}) and (\ref{symmetric canonical basis}), we have
\begin{align*}
T(\omega_i|a_j)&=T(\omega_i^\dag|a^\dag_{j-\mathfrak{g}})=\overline{T(\omega_i|a_{j-\mathfrak{g}})}=-i\delta_{i,j-\mathfrak{g}}-\mathfrak{B}_{j-\mathfrak{g},i},\\
T(\omega_i|b_j)&=T(\omega_i^\dag|-b^\dag_{j-\mathfrak{g}})=-\overline{T(\omega_i^\dag|b^\dag_{j-\mathfrak{g}})}=i\delta_{i,j}-\mathfrak{B}_{j,i}
\end{align*}
for $j>\mathfrak{g}$. Then the period matrix of the basis $\omega_\cdot$ is given by
$$\mathbb{T}([\tilde{l}_\cdot]|\omega_\cdot)=\big(i\chi^{+-}-\mathfrak{B}^T\chi^{++}|i\chi_{++}-\mathfrak{B}^T\chi_{+-}\big),$$
where 
\begin{eqnarray}
\label{matrices}
\chi^{\mathfrak{s},\mathfrak{s}'}=\left(\begin{array}{cc}
\mathfrak{s}I_{\mathfrak{g}} & \mathfrak{s}'I_{\mathfrak{g}}\\
0 & 0
\end{array}\right), \qquad \chi_{\mathfrak{s},\mathfrak{s}'}=\left(\begin{array}{cc}
0 & 0 \\
\mathfrak{s}I_{\mathfrak{g}} & \mathfrak{s}'I_{\mathfrak{g}}
\end{array}\right) \qquad (\mathfrak{s},\mathfrak{s}'=\pm)
\end{eqnarray}
and $I_{m}$ is the unit $m\times m$-matrix. Introduce the new basis $\tilde{\omega}_\cdot$ in $H^1(\mathbb{M};K)$ by $\tilde{\omega}_i=\sum_j M_{ij}\omega_j$, where $M=(i\chi^{+-}-\mathfrak{B}^T\chi^{++})^{-1}$. Then its period matrix is equal to
$$\mathbb{T}([\tilde{l}_\cdot]|\tilde{\omega}_\cdot)=(I_{2\mathfrak{g}} \ | \ (i\chi^{+-}-\mathfrak{B}^T\chi^{++})^{-1}(i\chi_{++}-\mathfrak{B}^T\chi_{+-})\big).$$
Hence, the basis $\tilde{\omega}_\cdot$ is dual to $[\tilde{l}_\cdot]$. In particular, the $b$-period matrix $\mathbb{B}$ of $\mathbb{M}$ corresponding to $[\tilde{l}_\cdot]$ is related to $\mathfrak{B}$ via
\begin{equation}
\label{connection of period matrices}
\mathbb{B}=(i\chi^{+-}-\mathfrak{B}^T\chi^{++})^{-1}(i\chi_{++}-\mathfrak{B}^T\chi_{+-}).
\end{equation}
Symmetry (\ref{symmetric canonical basis}) of the canonical homology basis leads to the symmetries of the dual basis $\omega_\cdot$ and the $b$-period matrix. Indeed, (\ref{symmetric canonical basis}) and (\ref{involution of periods}) imply
$$T(\omega_{i+\mathfrak{g}}|a_{j+\mathfrak{g}})=\delta_{ij}=T(\omega_i|a_j)=T(\omega^\dag_i|a_{j+\mathfrak{g}}), \qquad T(\omega_{i+\mathfrak{g}}|a_{j})=0=T(\omega_i|a_{j+\mathfrak{g}})=T(\omega^\dag_i|a_{j})$$
for $j\le\mathfrak{g}$. Since the Abelian differentials are determined by their $a$-periods, we have $\omega_{g+i}=\omega^\dag_i$ for $i=1,\dots,\mathfrak{g}$. As a corollary, we obtain
\begin{equation}
\label{period matrix symmetries}
\mathbb{B}_{\mathfrak{g}+i,\mathfrak{g}+j}=T(\omega^\dag_i|,-b^\dag_{j})=-\overline{\mathbb{B}_{ij}}, \qquad \mathbb{B}_{\mathfrak{g}+i,j}=T(\omega^\dag_i|b_j)=\overline{T(\omega_i|-b_{\mathfrak{g}+j})}=-\overline{\mathbb{B}_{i,\mathfrak{g}+j}}.
\end{equation}

\section{Procedure for determination of period matrix of double cover of $M$ from its DN map}
\label{Algorithm sec}
\paragraph{Step 1. Determination of boundary data of harmonic normal vectors on $M$.} Let $u=u^f$ be a harmonic function in $M$ with trace $f$ on $\Gamma$. Then $\Phi\nabla u$ is a harmonic field and the decomposition $\mathcal{H}=\mathcal{E}\oplus\mathcal{D}$ yields 
\begin{equation}
\label{Helmholtz decomposition}
\Phi\nabla u^f=\nabla u^h+A,
\end{equation}
where $u^h$ is a harmonic function in $M$ with trace $h$ on $\Gamma$ and $A\in\mathcal{D}$. Note that $A=(\Phi-\hat{\Phi})\nabla u^f$ and $\nabla u^h=\hat{\Phi}\nabla u^f$. 

The vector field 
\begin{equation}
\label{normal field B}
B=\Phi A
\end{equation}
is an element of $\mathcal{N}$. Restricting equations (\ref{Helmholtz decomposition}), (\ref{normal field B}) to the boundary and taking into account (\ref{orientation of boundary}), we obtain
\begin{equation}
\label{Helmholtz decomposition on boundary}
-\partial_\gamma f=\Lambda h, \qquad \Lambda f=\partial_\gamma h+A_\gamma, \qquad B_\nu=-A_\gamma.
\end{equation}
In particular, we have
\begin{equation}
\label{B from f}
B_\nu=\partial_\gamma h-\Lambda f=-(\partial_\gamma\Lambda^{-1}\partial_\gamma+\Lambda)f=-\partial_\gamma(H+H^{-1})f.
\end{equation}
As is easily seen from (\ref{B from f}), $f$ is determined by $B$ up to an element of $\Re{\rm Ker}(H+H^{-1})={\rm clos}_{H^{1/2}(\Gamma;\mathbb{R})}\big(\Re{\rm Tr}\mathscr{A}(M)\big)$. This is related to the fact that the fields $A,B$ do not change after addition the Cauchy-Riemann equation $\Phi\nabla u^{\tilde{f}}=\nabla u^{\tilde{h}}$ to (\ref{Helmholtz decomposition}), where $u^{\tilde{f}}+iu^{\tilde{h}}\in{\rm clos}_{H^1(M,g)}\big(\mathscr{A}(M)\big)$. One can fix $f$ by the additional condition 
$$(f,\tilde{f})_\Lambda=0 \qquad \forall \ {\rm Ker}(H+H^{-1});$$
then $f$ is uniquely defined by $B_\nu$ and admits the representation
\begin{equation}
\label{f via eigenfunctions}
f=\sum_{\pm k=1}^{\mathfrak{g}}\hat{f}_k\eta_k \qquad (\hat{f}_{-k}=\overline{\hat{f}_k}\in\mathbb{C}),
\end{equation}
where $\eta_k$ are given by (\ref{basis of eigenfunctions}). In particular, one can assume that $f$ is smooth.

In what follows, we say that $(B_\nu,f)$ is a {\it boundary data} for $B\in\mathcal{N}$ and denote $(B_\nu,f)=\mathfrak{T}(B)$. Note that each $B\in\mathcal{N}$ admits boundary data. Indeed, formula (\ref{all defect vectors}) implies that each $A=-\Phi B$ admits representation (\ref{Helmholtz decomposition}). The space $\mathfrak{T}(\mathcal{N})$ of all boundary data is denoted by $\mathcal{N}_\Gamma$. Since each $B\in\mathcal{N}$ is determined by the normal component of its boundary trace, the linear map $\mathfrak{T}: \ \mathcal{N}\to \mathcal{N}_\Gamma$ is a bijection.

From (\ref{Helmholtz decomposition}), (\ref{normal field B}), (\ref{Green formula}), and (\ref{Helmholtz decomposition on boundary}) it follows that
\begin{equation}
\label{norm of B in terms of boundary data}
\begin{split}
\|B\|^2=&\|A\|^2=\|\Phi\nabla u^f\|^2-\|\nabla u^h\|^2=\|\nabla u^f\|^2-\|\nabla u^h\|^2=\\
=&(\Lambda f,f)_\Gamma-(\Lambda h,h)_\Gamma=(\Lambda f,f)_\Gamma-(\partial_\gamma f,\Lambda^{-1}\partial_\gamma f)_\Gamma=-(B_\nu,f)_\Gamma.
\end{split}
\end{equation}
Due to (\ref{norm of B in terms of boundary data}) and the polarization identity, the inner products between elements of $\mathcal{N}$ can be founded from their boundary data. Namely, we have
$$(B,B')=-(B_\nu,f')_\Gamma=-(f,B'_\nu)_\Gamma,$$
where $(B'_\nu,f')$ is the boundary data of $B'\in\mathcal{N}$. Similarly, since the subspaces $\mathcal{D}$ and $\mathcal{N}=\Phi\mathcal{D}$ are $L_2(M;TM)$-orthogonal to $\mathcal{E}$ and $\Phi\mathcal{E}$, respectively, we have
\begin{align*}
(\Phi B,B')=-(B,\Phi B')=-(\Phi &A,\Phi B')=-(A,B')=(\nabla u^h-\Phi\nabla u^f,B')=\\
=(\Phi&\nabla u^f,B')+(\nabla v^h,B')=0+(\nabla u^h,B')=\\
=&\int_M {\rm div}(u^hB')dS-\int_M u^h{\rm div}(B')dS=\\&=
\int_\Gamma h B'_\nu dl+0=-(\Lambda^{-1}\partial_\gamma f,B'_\nu)_\Gamma=-(Hf,B'_\nu)_\Gamma.
\end{align*}

We arrive at the following statement.
\begin{prop}
\label{determination of normal harmonic fields}
Using the DN map $\Lambda$ of $M$, one can construct the isometric copy $\mathcal{N}_\Gamma$ of $\mathcal{N}$ in the following way:
\begin{itemize}
\item The space $\mathcal{N}_\Gamma$ is defined by 
$$\mathcal{N}_\Gamma:=\Big\{(B_\nu,f) \quad \ | \ f=\sum_{\pm k=1}^{\mathfrak{g}}\hat{f}_k\eta_k, \quad \hat{f}_{-k}=\overline{\hat{f}_k}\in\mathbb{C}, \quad B_\nu=-\partial_\gamma(H+H^{-1})f\Big\},$$
where $H=\Lambda^{-1}\partial_\gamma$ is the Hilbert map and $\eta_{\pm 1},\dots,\eta_{\pm\mathfrak{g}}$ are eigenfunctions {\rm(}\ref{basis of eigenfunctions} {\rm )} corresponding to the eigenvalues $\lambda_{\pm 1},\dots,\lambda_{\pm\mathfrak{g}}$ different from $\pm i$.
\item $\mathcal{N}_\Gamma$ is endowed with the inner product
\begin{equation}
\label{inner product on boundary data}
{\bm(}(B_\nu,f),(B'_\nu,f'){\bm)}:=-(B_\nu,f')_\Gamma=-(f,B'_\nu)_\Gamma
\end{equation}
and the alternating bilinear form
\begin{equation}
\label{alterneting form}
{\bm\langle}(B_\nu,f),(B'_\nu,f'){\bm\rangle}:=(B_\nu,Hf')_\Gamma=-(Hf,B'_\nu)_\Gamma.
\end{equation}
\end{itemize}
Then the map $\mathfrak{T}: \ \mathcal{N}\to \mathcal{N}_\Gamma$ introduced after {\rm(}\ref{f via eigenfunctions}{\rm)} is an isometry obeying
\begin{equation}
\label{alternating form isomorphism}
(\Phi B,B')={\bm\langle}\mathfrak{T}(B),\mathfrak{T}(B'){\bm\rangle} \qquad (B,B'\in\mathcal{N}).
\end{equation}
\end{prop}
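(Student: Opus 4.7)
The plan is to verify the three assertions of the proposition by organizing the displayed computations (\ref{Helmholtz decomposition})–(\ref{norm of B in terms of boundary data}) that already precede the statement, together with the spectral picture of $H$ provided by Lemma \ref{traces of holom func lemma}. The actual computational work has already been done in the text above the statement, so the proof is largely one of assembling and justifying the bookkeeping.

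First, I would verify that $\mathcal{N}_\Gamma$, as defined by the displayed formula, is exactly $\mathfrak{T}(\mathcal{N})$. Given $B\in\mathcal{N}$, the field $A=-\Phi B$ lies in $\mathcal{D}$, and (\ref{all defect vectors}) yields some $f$ with $(\Phi-\hat\Phi)\nabla u^f=A$; the derivation (\ref{Helmholtz decomposition})–(\ref{B from f}) then forces $B_\nu=-\partial_\gamma(H+H^{-1})f$. Two admissible $f$'s differ by an element of $\mathfrak{E}^{-1}\mathrm{Ker}(\Phi-\hat\Phi)|_{\mathcal{E}}=\mathrm{Ker}(H+H^{-1})$, which by Lemma \ref{traces of holom func lemma} equals $\mathrm{Ker}\,H\oplus\mathrm{Ker}(H-i)\oplus\mathrm{Ker}(H+i)$. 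Fixing $f$ in the $(\cdot,\cdot)_\Lambda$-orthogonal complement of this subspace, i.e.\ in the span of the eigenfunctions $\eta_{\pm k}$ corresponding to eigenvalues $\lambda_{\pm k}\ne 0,\pm i$, gives the representation (\ref{f via eigenfunctions}); the reality constraint $\hat f_{-k}=\overline{\hat f_k}$ encodes that $f$ is real-valued, since $B_\nu$ is real. This establishes that the described pairs are precisely the image of $\mathfrak{T}$, and that $\mathfrak{T}$ is a bijection onto $\mathcal{N}_\Gamma$.

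Second, I would check the isometry property. The identity (\ref{norm of B in terms of boundary data}) reads $\|B\|^2=-(B_\nu,f)_\Gamma$, which is exactly ${\bm(}\mathfrak{T}(B),\mathfrak{T}(B){\bm)}$; positivity of the form on $\mathcal{N}_\Gamma$ is then automatic, and the polarization identity upgrades this to $(B,B')={\bm(}\mathfrak{T}(B),\mathfrak{T}(B'){\bm)}$. The second equality in (\ref{inner product on boundary data}) follows by symmetry of $(B,B')$ together with the self-adjointness of $\Lambda$ (equivalently, by reading off the analogous expression with the roles of $B$ and $B'$ exchanged in the derivation of (\ref{norm of B in terms of boundary data})).

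Third, I would establish (\ref{alternating form isomorphism}). This is exactly the displayed chain of equalities directly above the proposition statement, whose final line reads $(\Phi B,B')=-(Hf,B'_\nu)_\Gamma$; that the same quantity equals $(B_\nu,Hf')_\Gamma$ is the anti-Hermiticity of $H$ in $(\cdot,\cdot)_\Lambda$ (Lemma \ref{traces of holom func lemma}) combined with $\Lambda H=-\partial_\gamma$ and the definition of $(\cdot,\cdot)_\Lambda$. The only genuinely nontrivial step in the whole argument is the identification of $\mathrm{Ker}(H+H^{-1})$ used in the first paragraph, but this is already packaged in Lemma \ref{traces of holom func lemma}; beyond this point the proof is pure collation of the identities derived above.
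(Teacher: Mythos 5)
Your proposal is correct and follows essentially the same route as the paper: the paper offers no separate proof but introduces the proposition with ``Summing up the above...'', so its argument is exactly the chain (\ref{Helmholtz decomposition})--(\ref{norm of B in terms of boundary data}) plus the displayed computation of $(\Phi B,B')$, which is what you assemble. Your added justifications (the identification of $\mathrm{Ker}(H+H^{-1})$ via Lemma \ref{traces of holom func lemma}, polarization for the isometry, and anti-Hermiticity/antisymmetry for the two expressions of the alternating form) match the paper's implicit reasoning.
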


\paragraph{Step 2. Determination of boundary data of harmonic normal vectors with integer periods on $M$.} Let us rewrite (\ref{Helmholtz decomposition}), (\ref{normal field B}) as follows
\begin{equation}
\label{Helmholtz decomposition 2}
\Phi\nabla u^h=B-\nabla u^f.
\end{equation}
Let $U$ be an arbitrary simple connected neighborhood in $M$. Since $u^h$ is harmonic in $U$, the Poincar\'e lemma implies that there is a harmonic function $V$ in $U$ obeying $\nabla V=\Phi\nabla u^h$. Hence, the function
\begin{equation}
\label{multivalued function}
x\mapsto W(x):=u^h(x)+iV(x)=u^h(x)+i\int_\cdot^x(\Phi\nabla u^h)^\flat+i{\rm const}
\end{equation} 
is holomorphic in $U$. However $W$ is not in general globally defined on $M$: after analytic continuation along the loop $l$ from any non-trivial cycle $[l]\in H_1(M,\mathbb{Z})$ in $M$ its value acquires the shift 
$$T(\Phi\nabla u^h|[l])=T(B|[l])$$
(the equality follows from (\ref{Helmholtz decomposition 2})). Note that one can chose a single-valued branch of $W$ in a tubular neighborhood of $\Gamma$ due to $\int_\Gamma B^\flat=\int_\Gamma B_\gamma dl=0$. In view of (\ref{Helmholtz decomposition 2}) and (\ref{Helmholtz decomposition on boundary}), the boundary trace of $W$ obeys
\begin{equation*}
\begin{split}
\partial_\gamma W|_\Gamma=\partial_\gamma h+ig(\nabla V,\gamma)=\partial_\gamma h+ig(\Phi\nabla u^h,\gamma)=\partial_\gamma h+i\partial_\nu u^h=\\=(\partial_\gamma+i\Lambda) h=-(\partial_\gamma+i\Lambda)\Lambda^{-1}\partial_\gamma f=-\partial_\gamma(H+i)f.
\end{split}
\end{equation*}
Hence,
$$W|_\Gamma=-(H+i)f+iC$$
Here $C\in \mathbb{R}$ is a constant on $\Gamma$ which depends on the choices of the constant in (\ref{multivalued function}) and branch of $W$ near $\Gamma$. From now on, we assume that (the branch of) $W$ is chosen in such a way that $C=0$.

The multivalued function $e^{2\pi W}$ acquires the multiplier $e^{2\pi iT(B|[l])}$ after analytic continuation  along each closed loop $l$ in $M$. Therefore, $e^{2\pi W}$ is single-valued if and only if all periods $T(B|[l])$ ($[l]\in H_1(M,\mathbb{Z})$) of $B$ are integer. In the last case, $e^{2\pi W}|_\Gamma$ is an element of ${\rm Tr}\mathscr{A}(M)\equiv({\rm Ker}(H-i)\dotplus\mathbb{C})\cap C^{\infty}(\Gamma;\mathbb{C})$  due to Lemma \ref{traces of holom func lemma}. So, $B$ has integer periods only if the equation
\begin{equation}
\label{Main equation}
\partial_\gamma(H-i)\big[e^{-2\pi (H+i)f}\big]=0
\end{equation}
holds on $\Gamma$. Note that (\ref{Main equation}) is invariant under the replacement $f\mapsto f+q$, where $q$ is a smooth element of ${\rm Ker}(H+H^{-1})={\rm Ker}(H-i)\dotplus{\rm Ker}(H+i)\dotplus\mathbb{C}$. Indeed, if $q\in {\rm Ker}(H+i)\dotplus\mathbb{C}$, then $q$ is a boundary trace of some holomorphic function $w$ and $e^{-2\pi (H+i)q}=e^{-4\pi i q}$ is the trace of $e^{-4\pi i w}$. Then the condition $e^{-2\pi (H+i)f}\in {\rm Tr}\mathscr{A}(M)$ (equivalent to (\ref{Main equation})) implies $e^{-2\pi (H+i)(f+q)}=e^{-2\pi (H+i)f}e^{-4\pi i w}\in {\rm Tr}\mathscr{A}(M)$ and vice versa.

Now, suppose that (\ref{Main equation}) holds on $\Gamma$. Then there is a holomorphic function $w$ on $M$ whose boundary trace is equal to $e^{-2\pi (H+i)f}=e^{2\pi W}|_\Gamma$. Since $e^{2\pi W}$ and $w$ are holomorphic and $e^{2\pi W}=w$ on $\Gamma$, they coincide everywhere where one of them can be analytically continued. Thus, $e^{2\pi W}=w$ on $M$ and $e^{2\pi W}$ is single-valued. The latter means that $B$ has integer periods $T(B|[l])$ ($[l]\in H_1(M,\mathbb{Z})$). Thus, we arrive at the following statement.
\begin{prop}
\label{integer periods condition prop}
Introduce be the group
$$\mathscr{G}=\{B\in\mathcal{N} \ \ | \ T(B|[l])\in\mathbb{Z} \quad \forall [l]\in H_1(M,\mathbb{Z})\}$$
of vector fields with integer periods in $\mathcal{N}$ and denote by $\mathscr{G}_{\Gamma}=\mathfrak{T}(\mathscr{G})$ the corresponding group in $\mathcal{N}_\Gamma$. Then $\mathscr{G}_{\Gamma}$ can be determined from the DN map $\Lambda$ via the formula
$$\mathscr{G}_{\Gamma}=\{(B_\nu,f)\in\mathcal{N}_\Gamma \ \ | \ f \text{ is a solution to {\rm (}\ref{Main equation}{\rm)}}\}.$$
\end{prop}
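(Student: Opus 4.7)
The plan is to associate to each $B\in\mathcal{N}$ a (possibly multivalued) holomorphic function $W$ on $M$ whose monodromy around a cycle $[l]$ equals exactly $T(B|[l])$, and then detect integrality of the periods through single-valuedness of $e^{2\pi W}$. First, rewriting the Helmholtz decomposition (\ref{Helmholtz decomposition}) (with $A=-\Phi B$) as $\Phi\nabla u^h=B-\nabla u^f$, the Poincar\'e lemma gives, in any simply connected $U\subset M$, a harmonic $V$ with $\nabla V=\Phi\nabla u^h$, so that $W:=u^h+iV$ is holomorphic in $U$. I would then verify that analytic continuation of $W$ along a closed loop $l$ shifts its value by $\int_l(\Phi\nabla u^h)^\flat=T(\Phi\nabla u^h|[l])=T(B|[l])$, where the last equality uses $\nabla u^f$ being globally defined (hence having zero periods) together with the decomposition above.

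Next, I would use the fact that a single-valued branch of $W$ exists in a tubular neighborhood of $\Gamma$ (since $\int_\Gamma B_\gamma\,dl=0$ as $B$ is normal to $\Gamma$) to compute its boundary trace. From (\ref{Helmholtz decomposition on boundary}) and the identity $\Phi\nabla u^h|_\Gamma\cdot\gamma=\partial_\nu u^h=\Lambda h$, together with $H=\Lambda^{-1}\partial_\gamma$, a direct calculation of $\partial_\gamma W|_\Gamma=\partial_\gamma h+i\Lambda h$ yields $\partial_\gamma W|_\Gamma=-\partial_\gamma(H+i)f$, so that $W|_\Gamma=-(H+i)f$ after normalizing the additive imaginary constant of integration to zero.

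The key observation is that $e^{2\pi W}$ is globally single-valued on $M$ if and only if every monodromy factor $e^{2\pi i\,T(B|[l])}$ is $1$, i.e.\ iff $B\in\mathscr{G}$. For the forward direction, if $B\in\mathscr{G}$ then $e^{2\pi W}$ is a genuine holomorphic function on $M$ with boundary trace $e^{-2\pi(H+i)f}$, and Lemma~\ref{traces of holom func lemma} places this trace in $\mathrm{Ker}(H-i)\oplus\mathbb{C}$; applying $\partial_\gamma(H-i)$ gives precisely (\ref{Main equation}). For the converse, suppose (\ref{Main equation}) holds; then Lemma~\ref{traces of holom func lemma} produces a function $w\in\mathscr{A}(M)$ whose trace is $e^{-2\pi(H+i)f}$. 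Near $\Gamma$, the single-valued branch of $e^{2\pi W}$ and $w$ are both holomorphic and agree on $\Gamma$, so by the identity principle they coincide throughout $M$. Hence $e^{2\pi W}=w$ is single-valued globally, forcing $T(B|[l])\in\mathbb{Z}$ for every $[l]\in H_1(M,\mathbb{Z})$ and $B\in\mathscr{G}$.

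The main technical obstacle I anticipate is the bookkeeping of the monodromy step: one must be careful to isolate the part of $W$ whose periods are genuinely topological (the integrals of the normal field $B$) from the contributions of gradients of globally defined harmonic functions, and to ensure that the choice of branch near $\Gamma$ is consistent with the normalization of the additive constant in $W|_\Gamma$. Once the monodromy identity is pinned down, everything else reduces to the single-valuedness criterion for $e^{2\pi W}$ and the characterization of $\mathrm{Tr}\,\mathscr{A}(M)$ supplied by Lemma~\ref{traces of holom func lemma}.
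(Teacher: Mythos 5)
Your proposal is correct and follows essentially the same route as the paper: the same multivalued holomorphic primitive $W=u^h+iV$ with monodromy $T(B|[l])$, the same boundary-trace identity $W|_\Gamma=-(H+i)f$, and the same single-valuedness criterion for $e^{2\pi W}$ combined with Lemma~\ref{traces of holom func lemma} and the identity principle for the converse. No substantive differences to report.
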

Using representation (\ref{f via eigenfunctions}) for $f$, one can rewrite equation (\ref{Main equation}) in more convenient form. Let $\hat{f}_k=\overline{\hat{f}_{-k}}=\alpha_k+i\beta_k$, where $\alpha_k,\beta_k\in\mathbb{R}$ ($k=1,\dots,\mathfrak{g}$). Introduce the functions
\begin{equation}
\label{sepatrate factors}
\begin{split}
p_k:&={\rm exp}\big(-2\pi i[\eta_k(1+\mu_k)+\overline{\eta_k}(1-\mu_k)]\big), \\ 
q_k:&={\rm exp}\big(2\pi[\eta_k(\mu_k+1)+\overline{\eta_k}(\mu_k-1)]\big),
\end{split}
\end{equation}
where $\eta_k,\mu_k$ are given by (\ref{basis of eigenfunctions}) and (\ref{eigenvalues}). Then  
$$-2\pi i(H+i)f=2\pi\sum_k [ c_k\eta_k(\mu_k+1)-\overline{2\pi c_k\eta_k}(\mu_k-1)]$$ 
and (\ref{Main equation}) is equivalent to (\ref{Main equation on the coefficients}). As easily seen from (\ref{Main equation on the coefficients}), condition (\ref{Main equation}) is actually an equation on $2\mathfrak{g}$ real variables $\alpha_k,\beta_k$. Thus, $\varkappa:=(\alpha_1,\dots,\alpha_\mathfrak{g},\beta_1\dots,\beta_\mathfrak{g})$ is a solution to (\ref{Main equation}) if and only if
\begin{equation}
\label{parameters to data}
\begin{split}
(B_\nu(\varkappa),f(\varkappa))=\Big(i\sum_{k=1}^{\mathfrak{g}}[(\mu_k^{-1}-\mu_k)(\alpha_k+i\beta_k)\partial_\gamma\eta_k&+(\mu_k^{-1}+\mu_k)(\alpha_k-i\beta_k)\partial_\gamma\overline{\eta}_k],\\ 
&\sum_{k=1}^{\mathfrak{g}}[(\alpha_k+i\beta_k)\eta_k+(\alpha_k-i\beta_k)\overline{\eta}_k]\Big)
\end{split}
\end{equation}
is a boundary data of an element of $\mathscr{G}$.

\paragraph{Step 3. Determination of boundary data of canonical bases in $\mathcal{N}$.} As shown in Proposition  \ref{integer periods condition prop}, the solutions to (\ref{Main equation}) (or to (\ref{Main equation on the coefficients})) provide the boundary data of vectors with integer periods in $\mathcal{N}$. The next step is to find among them the boundary data $\mathfrak{T}(B_1),\dots,\mathfrak{T}(B_{2\mathfrak{g}})$ corresponding to some canonical basis $B_1,\dots,B_{2\mathfrak{g}}$ in $\mathcal{N}$. To this end, we apply the following statement. 
\begin{prop}
\label{improved criterion for canonicity prop}
\begin{enumerate}[label=\alph*{\rm)}]
\item Let $B_1,\dots,B_{2\mathfrak{g}}$ be a basis in $\mathcal{N}$ such that each field $B_k$ has integer periods. Then it is canonical {\rm(}i.e., dual to some canonical homology basis{\rm)} if and only if 
\begin{equation}
\label{canonicity criterion without duality}
(\Phi B_i,B_j)=(\Omega_{\mathfrak{g}})_{ij} \qquad \forall i,j=1,\dots,2\mathfrak{g}.
\end{equation}
\item Let $\kappa_1,\dots,\kappa_{2\mathfrak{g}}$ be elements of $\mathscr{G}_{\Gamma}$. Then $\mathfrak{T}^{-1}(\kappa_1),\dots,\mathfrak{T}^{-1}(\kappa_{2\mathfrak{g}})$ constitute canonical basis in $\mathcal{N}$ if and only if
\begin{equation}
\label{canonicity criterion on boundary}
{\bm\langle}\kappa_i,\kappa_j{\bm\rangle}=(\Omega_{\mathfrak{g}})_{ij} \qquad \forall i,j=1,\dots,2\mathfrak{g}
\end{equation}
{\rm(}the form ${\bm\langle}\cdot,\cdot{\bm\rangle}$ is given by {\rm(\ref{alterneting form}))}.
\end{enumerate}
\end{prop}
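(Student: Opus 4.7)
The plan is to reduce part b) to part a) using the intertwining identity (\ref{alternating form isomorphism}), and to establish part a) by exhibiting $B_\cdot$ explicitly as the dual basis to a canonical homology basis built from the period matrix of $B_\cdot$. The forward implication in a) is already contained in Lemma \ref{canonicity criterion lemma} b), so only the converse requires work.

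For the converse in a), I would fix once and for all any canonical homology basis $[\tilde l_1],\dots,[\tilde l_{2\mathfrak{g}}]$ on $M$ and record the period matrix $\tilde N$ with entries $\tilde N_{ik}:=T(B_i|[\tilde l_k])$. The integer-period hypothesis forces $\tilde N\in\mathbb{Z}^{2\mathfrak{g}\times 2\mathfrak{g}}$. Applying Lemma \ref{inner product via periods lemma} to the pair $B:=B_i\in\mathcal{N}$, $A:=\Phi B_j\in\mathcal{D}=\Phi\mathcal{N}$, and invoking $\Phi^2=-{\rm Id}$, the anti-symmetry $\Phi^*=-\Phi$, and $J^{-1}=-\Omega_{(\mathfrak{g})}$, reduces hypothesis (\ref{canonicity criterion without duality}) to the symplectic identity $\tilde N\,\Omega_{(\mathfrak{g})}\,\tilde N^T=\Omega_{(\mathfrak{g})}$. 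Hence $\tilde N$ is an integer symplectic matrix, and so is $\tilde N^{-1}$. I would then build the candidate canonical homology basis $[l'_j]:=\sum_k(\tilde N^{-T})_{jk}[\tilde l_k]$, which is legitimate because both $\tilde N^{-T}$ and its inverse are integer. A direct computation gives $T(B_i|[l'_j])=(\tilde N\tilde N^{-1})_{ij}=\delta_{ij}$, so $B_\cdot$ is dual to $[l'_\cdot]$; the transformation rule (\ref{transformation rule for intersection matrix}) together with the symplecticity of $\tilde N^{-T}$ (a purely algebraic consequence of that of $\tilde N$) yields intersection matrix $\tilde N^{-T}\Omega_{(\mathfrak{g})}\tilde N^{-1}=\Omega_{(\mathfrak{g})}$, so $[l'_\cdot]$ is canonical. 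Hence $B_\cdot$ is a canonical dual basis.

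For part b), set $B_i:=\mathfrak{T}^{-1}(\kappa_i)\in\mathscr{G}$, so each $B_i$ has integer periods by definition of $\mathscr{G}$. The intertwining identity (\ref{alternating form isomorphism}) shows that (\ref{canonicity criterion on boundary}) is equivalent to (\ref{canonicity criterion without duality}). Under this identity, the Gram-type matrix $\bigl((\Phi B_i,B_j)\bigr)_{i,j}=\Omega_{(\mathfrak{g})}$ is non-degenerate, so any relation $\sum_i c_i B_i=0$ produces $\sum_i c_i(\Phi B_i,B_j)=0$ for every $j$ and hence $c_i\equiv 0$; combined with (\ref{dim of defect space}), the vectors $B_1,\dots,B_{2\mathfrak{g}}$ form a basis in $\mathcal{N}$, and part a) concludes the proof.

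The main obstacle I anticipate is careful sign bookkeeping through Riemann's bilinear identity in Lemma \ref{inner product via periods lemma}. The crucial reduction $(\Phi B_i,B_j)=(\tilde N\,\Omega_{(\mathfrak{g})}\,\tilde N^T)_{ij}$ interlaces $\Phi^2=-{\rm Id}$, $\Omega_{(\mathfrak{g})}^{-1}=-\Omega_{(\mathfrak{g})}$, and $\Phi^*=-\Phi$, so a single misplaced sign would convert the desired symplectic constraint on $\tilde N$ into an anti-symplectic one and derail the construction of the canonical basis $[l'_\cdot]$.
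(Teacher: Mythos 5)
Your argument is correct and essentially coincides with the paper's proof: both reduce b) to a) via the intertwining identity (\ref{alternating form isomorphism}), and both prove the converse of a) by showing that the integer matrix of periods of $B_\cdot$ with respect to a canonical homology basis satisfies $\tilde N\Omega_{(\mathfrak{g})}\tilde N^T=\Omega_{(\mathfrak{g})}$ (via Lemma \ref{inner product via periods lemma}), hence is unimodular, and then passing to the homology basis obtained by the inverse(-transpose) transformation. The only, welcome, addition is that in b) you explicitly check that the $2\mathfrak{g}$ fields $\mathfrak{T}^{-1}(\kappa_i)$ are linearly independent using the non-degeneracy of $\Omega_{(\mathfrak{g})}$, a point the paper leaves implicit when invoking a).
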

\begin{proof}
{\it a}) The necessity follows from  Lemma \ref{canonicity criterion lemma}, {\it b}). Let us prove the sufficiency. Let $Q_1,\dots,Q_{2\mathfrak{g}}$ be a canonical basis in $\mathcal{N}$ and let $[l_\cdot]$ be the corresponding canonical homology basis. Since $B_1,\dots,B_{2\mathfrak{g}}$ are linearly independent, we have $B_i=M_{ij}Q_j$, where $M$ is an invertible matrix. Then
$$T(B_i|[l_k])=\sum_{j}M_{ij}T(Q_j|[l_k])=\sum_{j}M_{ij}\delta_{jk}=M_{ik}$$
and, since each $B_i$ has integer periods, the entries of $M$ are integer. 

In view to Lemma \ref{inner product via periods lemma}, condition (\ref{canonicity criterion without duality}) implies
\begin{align*}
(\Omega_{\mathfrak{g}})_{ij}=(\Phi B_i,B_j)=-(B_i,\Phi B_j)=\sum_{ks}T(B_i|[l_k])(-\Omega_{\mathfrak{g}}^{-1})_{ks}T(-\Phi^2 B_j|[l_s])=\\
=\sum_{ks}T(B_i|[l_k])(\Omega_{\mathfrak{g}})_{ks}T(B_j|[l_s])=\sum_{ks}M_{ik}(\Omega_{\mathfrak{g}})_{ks}M^T_{sj}=(M\Omega_{\mathfrak{g}}M^T)_{ij}.
\end{align*}
Thus, we have $\Omega_{\mathfrak{g}}=M\Omega_{\mathfrak{g}} M^T$ and $0\ne {\rm det}(\Omega_{\mathfrak{g}})={\rm det}(\Omega_{\mathfrak{g}})({\rm det}(M))^2$, whence ${\rm det}(M)=\pm 1$ since the entries of $M$ are integer. Thus, entries of $M^{-1}$ are also integer and $B_1,\dots,B_{2\mathfrak{g}}$ is a basis in $\mathcal{N}$ dual to the homology basis 
$$[\tilde{l}_i]=(M^{-1})_{ij}[l_j].$$ Moreover, the new homology basis $[\tilde{l}_\cdot]$ is canonical due to Lemma \ref{canonicity criterion lemma}, {\it b}). 

{\it b}) Denote $B_k=\mathfrak{T}^{-1}(\kappa_k)$, then conditions (\ref{canonicity criterion without duality}) and (\ref{canonicity criterion on boundary}) are equivalent due to (\ref{alternating form isomorphism}). Therefore, {\it b}) follows from {\it a}).
\end{proof}

\paragraph{Step 4. Determination of period matrices of $M$ and $\mathbb{M}$.} Let $\kappa_1=\mathfrak{T}(B_1),\dots,\kappa_{2\mathfrak{g}}=\mathfrak{T}(B_{2\mathfrak{g}})$ be elements of $\mathcal{G}_\Gamma$ obeying condition (\ref{canonicity criterion on boundary}). In view of Proposition \ref{improved criterion for canonicity prop}, {\it b}), vectors $B_1,\dots,B_{2\mathfrak{g}}$ constitute a basis in $\mathcal{N}$ dual to some canonical homology basis $[l_\cdot]$ on $M$. In view of Lemma \ref{canonicity criterion lemma}, {\it c}) and Proposition \ref{determination of normal harmonic fields}, the auxiliary period matrix $\mathbb{B}$ corresponding to $[l_\cdot]$ obeys
\begin{equation}
\label{finding aux period matrix}
(\Omega_{(\mathfrak{g})}\mathfrak{B})_{ij}=(B_i,B_j)={\bm(}\kappa_i,\kappa_j{\bm)}
\end{equation}
(the inner product on $\mathcal{N}_\Gamma$ is defined by (\ref{inner product on boundary data})). 

So, using the previous steps and formula (\ref{finding aux period matrix}), one determines the auxiliary period matrix $\mathfrak{B}$ of $M$ corresponding to some canonical homology basis $[l_\cdot]=\{a_1,\dots,a_{\mathbb{g}},b_1,\dots,b_{\mathbb{g}}\}$ on it. Then the $b$-period matrix $\mathbb{B}$ of $\mathbb{M}$, corresponding to symmetric canonical basis (\ref{symmetric canonical basis}) is derived from $\mathfrak{B}$ by applying formulas (\ref{connection of period matrices}), (\ref{matrices}).

It remains to note that, although one cannot control the choice of $[l_\cdot]$, it is still possible to find all other auxiliary period matrices (corresponding to all possible canonical homology bases on $M$) by applying transformations (\ref{aux period matrix transformation rule}) to $\mathfrak{B}$. Then the substitution of these matrices into (\ref{connection of period matrices}), (\ref{matrices}) provides all symmetric $b$-period matrices of $\mathbb{M}$.

\subsection*{On the stability of the algorithm under small noise in the boundary data}
Let $\Lambda$ be a DN map of some (unknown) surface $(M,g)$ (we assume that the boundary $\Gamma$ of $(M,g)$ is given). We now explain the implementation of Steps 1-4 for the case in which only some approximation $\Lambda'$ of $\Lambda$ is known. Namely, we assume that $\Lambda'$ is a continuous operator acting from $H^1(\Gamma;\mathbb{C})$ to $L_2^{\mathbb{C}}(\Gamma;dl)$ and obeying 
\begin{equation}
\label{noise level}
\|\Lambda-\Lambda'\|_{H^1(\Gamma;\mathbb{C})\to L_2^{\mathbb{C}}(\Gamma;dl)}\le \varepsilon,
\end{equation}
where $\varepsilon$ is a small parameter called the noise bound. In what follows, we suppose that the noise bound is known to the one who applies Steps 1-4. 

Now, we describe the implementation of Steps 1-4 to obtain the approximation of some $b$-period matrix $\mathbb{B}$ of the double of $(M,g)$ via $\Lambda'$.

\paragraph{Step 1 (implementation).} Introduce the approximate Hilbert transform $H'=\Lambda^{'-1}\partial_\gamma$. In view of (\ref{noise level}), the operator $H^{'-1}=\partial_\gamma^{-1}\Lambda'$ obeys 
$$\|H^{'-1}-H^{-1}\|_{H^{1}(\Gamma;\mathbb{C})\mapsto H^{1}(\Gamma;\mathbb{C})}=O(\varepsilon).$$ 
Here and in the subsequent, all estimates are assumed to be uniform in $\Lambda'$ (but not uniform in $\Lambda$). In particular, the spectrum ${\rm Sp}(H^{'-1})$ of $H^{'-1}$ is contained in the $O(\varepsilon)$-neighborhood of the spectrum ${\rm Sp}(H^{-1})$ of $H^{-1}$. The essential spectrum of $H^{-1}$ is $\{i\}\cup\{-i\}$. Since the set of Fredholm operators is open in the operator norm (see Theorem 1.4.17, \cite{Murphy}), the essential spectrum of $H^{'-1}$ is contained in the $O(\varepsilon)$-neighborhoods of $\pm i$. The same estimates are valid for the spectra of $H'$,$H$. 

To find the approximations for eigenvalues (\ref{eigenvalues}) and eigenfunctions (\ref{basis of eigenfunctions}), we apply the following simple lemma.
\begin{lemma}
Suppose that $\lambda$ is a regular eigenvalue of a continuous operator $A$ {\rm(}acting in some Banach space $E${\rm)} of finite multiplicity and there is the punctured $c_0$-neighborhood of $\lambda$ which does not intersect the spectrum of $A$. Let $A'$ be an arbitrary continuous operator in $E$ obeying $\|A'-A\|<\varepsilon$ for sufficiently small $\varepsilon$. Let $(\lambda',f')$ be any eigenpair of $A'$ obeying $|\lambda'-\lambda|<\varepsilon$ and $\|f'\|=1$. Then there is $f\in {\rm Ker}(A-\lambda)$ such that $\|f-f'\|=O(\varepsilon)$. 
\end{lemma}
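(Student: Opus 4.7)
The plan is to use Riesz spectral projectors, which is the standard tool for this kind of stability result. Choose $\gamma$ to be the circle $\{z\in\mathbb{C}\ |\ |z-\lambda|=c_0/2\}$. By assumption this contour lies in the resolvent set of $A$, and by compactness of $\gamma$ the resolvent $(zI-A)^{-1}$ is uniformly bounded on $\gamma$ by some constant $C=C(A,c_0)$. Define the Riesz projector $P:=\frac{1}{2\pi i}\oint_\gamma (zI-A)^{-1}\,dz$; since $\lambda$ is regular (i.e.\ the generalized eigenspace of $A$ at $\lambda$ coincides with $\mathrm{Ker}(A-\lambda)$), we have $\mathrm{Range}(P)=\mathrm{Ker}(A-\lambda)$.

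Next I would show that for $\varepsilon$ sufficiently small, $\gamma$ lies in the resolvent set of $A'$ as well, and the analogous projector $P':=\frac{1}{2\pi i}\oint_\gamma (zI-A')^{-1}\,dz$ is close to $P$. Using the second resolvent identity
\begin{equation*}
(zI-A')^{-1}-(zI-A)^{-1}=(zI-A')^{-1}(A'-A)(zI-A)^{-1},
\end{equation*}
together with the uniform bound on $(zI-A)^{-1}$ on $\gamma$ and a Neumann series argument giving $\|(zI-A')^{-1}\|\le 2C$ for $\varepsilon<1/(2C)$, one obtains $\|P'-P\|=O(\varepsilon)$, with implicit constant depending only on $A$ and $c_0$.

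The key observation now is that, since $|\lambda'-\lambda|<\varepsilon<c_0/2$, the eigenvalue $\lambda'$ lies inside the contour $\gamma$, so the spectral projector $P'$ satisfies $P'f'=f'$ (any eigenvector is contained in the generalized eigenspace attached to the enclosed spectrum). Set $f:=Pf'\in\mathrm{Ker}(A-\lambda)$. Then
\begin{equation*}
\|f-f'\|=\|Pf'-P'f'\|\le\|P-P'\|\,\|f'\|=O(\varepsilon),
\end{equation*}
which is the required conclusion.

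The main (mild) subtlety is the interpretation of ``regular'' eigenvalue, which I am taking to mean that the Riesz projector $P$ has range exactly $\mathrm{Ker}(A-\lambda)$ (no Jordan chains); this is the hypothesis that guarantees $f=Pf'$ lies in the true eigenspace and not merely in the generalized eigenspace. Apart from this, the argument is routine Dunford functional calculus; the proof is uniform in $A'$ (for $\|A'-A\|<\varepsilon$ small) but the constants depend on $A$ through $C$ and $c_0$, consistent with the non-uniformity in $\Lambda$ stated just above the lemma.
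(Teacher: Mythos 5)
Your proof is correct, but it follows a different route from the paper's. The paper avoids contour integration altogether: it fixes a closed complement $\tilde{E}$ of ${\rm Ker}(A-\lambda)$ in $E$ (which exists because the kernel is finite-dimensional), notes that $(A-\lambda)|_{\tilde{E}}$ is injective with closed range and hence has a bounded inverse $\tilde{A}$ by the closed graph theorem, writes $f'=f+\tilde{f}$ with $f\in{\rm Ker}(A-\lambda)$, $\tilde{f}\in\tilde{E}$, and estimates $\tilde{f}=\tilde{A}(A-\lambda)f'=\tilde{A}\big[(A-A')f'+(\lambda'-\lambda)f'\big]=O(\varepsilon)$ directly. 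Your Riesz-projector argument is the standard perturbation-theoretic machinery and is sound as written: the uniform resolvent bound on the compact contour, the Neumann-series invertibility of $zI-A'$, the second resolvent identity giving $\|P'-P\|=O(\varepsilon)$, and the observation $P'f'=f'$ are all correct, and in fact your approach yields more (closeness of the full spectral subspaces, hence preservation of multiplicity), which is information the paper implicitly relies on elsewhere in Step 1. The trade-off is that you must read ``regular'' as semisimple so that ${\rm Range}(P)={\rm Ker}(A-\lambda)$ — a harmless assumption here since the operator in the application ($H$, anti-hermitian in $(\cdot,\cdot)_\Lambda$) has only semisimple eigenvalues — whereas the paper's decomposition argument does not need semisimplicity, only that $(A-\lambda)$ restricted to a complement of the kernel be boundedly invertible onto its closed range (which follows from $\lambda$ being an isolated eigenvalue of finite multiplicity). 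Both proofs give constants depending only on $A$, $\lambda$, and $c_0$, as required for the uniformity claims in the stability discussion.
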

\begin{proof}
Consider the decomposition $E={\rm Ker}(A-\lambda)\dot{+}\tilde{E}$, where $\tilde{E}$ is a closed subspace in $E$ since ${\rm Ker}(A-\lambda)$ is finite-dimensional. Since $A$ is continuous, $(A-\lambda)\tilde{E}$ is closed and the operator $\tilde{A}=((A-\lambda)|_{\tilde{E}})^{-1}: \ (A-\lambda)\tilde{E}\to\tilde{E}$ is continuous due to the closed graph theorem. Decompose $f'$ as $f'=f+\tilde{f}$, where $f\in {\rm Ker}(A-\lambda)$ and $\tilde{f}\in\tilde{E}$. Since
$$\tilde{f}=\tilde{A}(A-\lambda)f'=\tilde{A}(A-A')f'-(\lambda-\lambda')\tilde{A}f',$$
we have $\|\tilde{f}\|=O(\varepsilon)$.
\end{proof}
Thus, to construct approximations of the eigenfunctions of $H$ corresponding to the unknown eigenvalue $\lambda=\lambda_i$, we find all (normalized in $L_2^{\mathbb{C}}(\Gamma;dl)$) eigenfunctions of $H'$ corresponding to the (nonzero) eigenvalues $\lambda'_k$ obeying $|\lambda'\pm i|>\sqrt{\varepsilon}$ and $|\lambda'_k-\lambda'_l|<\sqrt{\varepsilon}$ and then chose among them the maximal collection of pairwise orthogonal (in $L_2^{\mathbb{C}}(\Gamma;dl)$) eigenfunctions $\eta'_i,\dots,\eta'_{i+m(i)}$. As a result, for sufficiently small $\varepsilon$, we obtain the approximations $(\lambda'_{\pm k},\eta'_{\pm k})$ of the eigenpairs $(\lambda_{\pm k},\eta_{\pm k})$ obeying
\begin{equation}
\label{approx eigenpairs}
|\lambda'_{\pm k}-\lambda_{\pm k}|+\|\eta'_{\pm k}-\eta_{\pm k}\|_{H^{1}(\Gamma;\mathbb{C})}=O(\varepsilon) \qquad (k=1,\dots,\mathfrak{g}).
\end{equation}
Now, we introduce the space $\mathcal{N}'_\Gamma$ and the bilinear forms $(\cdot,\cdot)'$, $\langle\cdot,\cdot\rangle$ in the same way as in Proposition \ref{determination of normal harmonic fields}, where $\eta_{\pm k}$ are replaced by $\eta'_{\pm k}$. Denote $\kappa'_{\pm k}:=(-\partial_\gamma(H'+H^{'-1})\eta'_{\pm k},\eta'_{\pm k})$, then formulas (\ref{noise level}), (\ref{approx eigenpairs}) imply the closeness between the structures on $\mathcal{N}'_\Gamma$ and $\mathcal{N}_\Gamma$,
\begin{equation}
\label{closeness of the structures}
(\kappa'_{\pm i},\kappa'_{(\pm) j})'-(\kappa_{\pm i},\kappa_{(\pm) j})=O(\varepsilon), \quad \langle\kappa'_{\pm i},\kappa'_{(\pm) j}\rangle'-\langle\kappa_{\pm i},\kappa_{(\pm) j}\rangle=O(\varepsilon) \qquad (k=1,\dots,\mathfrak{g}).
\end{equation}

\paragraph{Step 2 (implementation).} Instead of (\ref{Main equation on the coefficients}), we consider the (approximate) equation
$$\partial_\gamma(H'-i)\big[(p'_1)^{\alpha'_1}\dots (p'_{\mathfrak{g}})^{\alpha'_\mathfrak{g}}(q'_1)^{\beta'_1}\dots (q'_{\mathfrak{g}})^{\beta'_\mathfrak{g}}\big]=0,$$
where $p'_k$, $q'_k$ are given by formula (\ref{sepatrate factors}) with $\eta_k$ replaced by $\eta'_k$. Introduce the functions
\begin{align*}
\mathscr{E}(\kappa)&:=\|\partial_\gamma(H-i)\big[p_1^{\alpha_1}\dots p_{\mathfrak{g}}^{\alpha_\mathfrak{g}}q_1^{\beta_1}\dots q_{\mathfrak{g}}^{\beta_\mathfrak{g}}\big]\|_{L^{\mathbb{C}}_2(\Gamma;dl)},\\
\mathscr{E}'(\kappa')&:=\|\partial_\gamma(H'-i)\big[(p'_1)^{\alpha'_1}\dots (p'_{\mathfrak{g}})^{\alpha'_\mathfrak{g}}(q'_1)^{\beta'_1}\dots (q'_{\mathfrak{g}})^{\beta'_\mathfrak{g}}\big]\|_{L^{\mathbb{C}}_2(\Gamma;dl)},
\end{align*}
where $\varkappa:=(\alpha_1,\dots,\alpha_\mathfrak{g},\beta_1\dots,\beta_\mathfrak{g})$ and $\varkappa':=(\alpha'_1,\dots,\alpha'_\mathfrak{g},\beta'_1\dots,\beta'_\mathfrak{g})$.

Recall that the global minima (i.e., zeroes) of $\mathscr{E}$ correspond to the boundary data of elements of $\mathcal{N}$ with integer periods via (\ref{parameters to data}). Let $\mathscr{B}$ be a sufficiently large closed ball in the parameter space $\mathbb{R}^{2\mathfrak{g}}$ of $\varkappa$ whose interior contains the zeroes of $\mathscr{E}$ corresponding to the elements of some canonical dual basis in $\mathcal{N}$. Then estimates (\ref{approx eigenpairs}), (\ref{noise level}), formula (\ref{sepatrate factors}), and the continuity of the embedding $H^{1}(\Gamma;\mathbb{C})\subset C(\Gamma;\mathbb{C})$ yield
\begin{equation}
\label{closenes of error functions}
\|\mathscr{E}'-\mathscr{E}\|_{C(\mathbb{R}^{2\mathfrak{g}})}=O(\varepsilon).
\end{equation}

Let $\varkappa\in\mathscr{B}$ be a zero of $\mathscr{E}$, then $\Pi=p_1^{\alpha_1}\dots p_{\mathfrak{g}}^{\alpha_\mathfrak{g}}q_1^{\beta_1}\dots q_{\mathfrak{g}}^{\beta_\mathfrak{g}}$ is a trace on $\Gamma$ of holomorphic invertible function on $(M,g)$. For small variations $\delta\varkappa:=(\delta\alpha_1,\dots,\delta\alpha_\mathfrak{g},\delta\beta_1,\dots,\delta\beta_\mathfrak{g})$, we have
\begin{align*}
\partial_\gamma(H-i)\big[p_1^{\alpha_1+\delta\alpha_1}\dots  &p_{\mathfrak{g}}^{\alpha_\mathfrak{g}+\delta\alpha_\mathfrak{g}}q_1^{\beta_1+\delta\beta_1}\dots q_{\mathfrak{g}}^{\beta_\mathfrak{g}+\delta\beta_\mathfrak{g}}\big]=\\
=&\partial_\gamma(H-i)\sum_k\big[{\rm log}p_k\delta \alpha_k+{\rm log}q_k\delta \beta_k\big]\Pi+O(|\delta\varkappa|^2).
\end{align*}
Note that the first term vanishes only if $\delta\varkappa=0$. Indeed, otherwise, there is the nonzero linear combination of $\Pi{\rm log}p_k$ and  $\Pi{\rm log}q_k$ which is a trace of holomorphic function. Since $\Pi^{-1}$ is a trace of holomorphic function and ${\rm log}p_k,{\rm log}q_k$ admit representations (\ref{sepatrate factors}), we conclude that there is a nonzero linear combination of $\eta_{\pm k}$ which is a trace of holomorphic function (i.e., an element of ${\rm Ker}(H-i)\dot{+}\mathbb{C}$), a contradiction. Thus, we obtain the non-degeneracy of all minima $\kappa$ of $\mathscr{E}$,
$$0<c_0<\frac{\mathscr{E}(\varkappa+\delta\varkappa)}{|\delta\varkappa|}<c_1<+\infty \qquad (|\delta\varkappa|<c_3),$$
where the constants $c_1,c_2,c_3$ depend on $\Lambda$ and $\mathscr{B}$. In particular, the inequality $|\mathscr{E}(\varkappa')|<\epsilon\ll 1$ implies that $\varkappa'$ lies in $O(\epsilon)$-neighborhood of some solution $\varkappa$ to (\ref{Main equation on the coefficients}).

Let us find the minimum $\varkappa'\in\mathscr{B}$ of $\mathscr{E}'$. Then $|\mathscr{E}'(\varkappa')|<\varepsilon$ and (\ref{closenes of error functions}) yields $|\mathscr{E}(\varkappa)|=O(\varepsilon)$. Thus, $|\varkappa'-\varkappa|=O(\varepsilon)$, where $\varkappa$ is a solution to (\ref{Main equation on the coefficients}). Now, remove from $\mathscr{B}$ the $\sqrt{\varepsilon}$-neighborhood of $\varkappa'$ and repeat the procedure, etc. As a result, we find all the approximations of the solutions to (\ref{Main equation on the coefficients}) in $\mathscr{B}$. For each approximation $\varkappa'$, we construct the approximate boundary data $\kappa':=(B'_\nu(\varkappa'),f'(\varkappa'))$ via formula (\ref{parameters to data}) with $\alpha_k,\beta_k,\eta_k,\mu_k$ replaced by $\alpha'_k,\beta'_k,\eta'_k,\mu'_k$, respectively. As a result, we obtain approximations $\kappa'$ of all boundary data $\kappa$ (with parameters $\varkappa$ in $\mathscr{B}$) obeying
\begin{equation}
\label{closeness of BD with IP}
\|\kappa'-\kappa\|_{L^\mathbb{C}_2(\Gamma;dl)\times H^{1}(\Gamma;\mathbb{C})}=O(\varepsilon).
\end{equation}
Since the radius of $\mathscr{B}$ is unknown, we actually start with some ball $\mathscr{B}'$, then enlarge it and repeat the above procedure, e.t.c., until we obtain the sufficiently large number of solutions $\kappa'$ to successfully perform the next step (finding the approximation of the boundary data of some canonical dual basis).

\paragraph{Step 3 (implementation).} Let us find the collection of the approximations $\kappa'_1,\dots,\kappa'_{2\mathfrak{g}}$ obeying
$$|{\bm\langle}\kappa'_i,\kappa'_j{\bm\rangle}'-(\Omega_{\mathfrak{g}})_{ij}|<\sqrt{\varepsilon}.$$
In view of (\ref{closeness of BD with IP}) and (\ref{closeness of the structures}), the corresponding exact boundary data $\kappa_1,\dots,\kappa_{2\mathfrak{g}}$ obey condition (\ref{canonicity criterion on boundary}) up to the discrepancy $O(\sqrt{\varepsilon})$. Since the left-hand side of (\ref{canonicity criterion on boundary}) is integer, this means that $\kappa_1,\dots,\kappa_{2\mathfrak{g}}$ constitute the boundary data of some canonical dual basis. 

\paragraph{Step 4 (implementation).} Let us calculate the $(2\mathfrak{g}\times 2\mathfrak{g})$-matrix $\mathfrak{P}'$ with the entries $\mathfrak{P}'_{ij}:=(\kappa'_i,\kappa'_j)'$. Then formula (\ref{finding aux period matrix}) and estimates (\ref{closeness of the structures}), (\ref{closeness of BD with IP}) imply that $\mathfrak{P}'-\Omega_{(\mathfrak{g})}\mathfrak{B}=O(\varepsilon)$, where $\mathfrak{B}$ is some auxilliary period matrix of $(M,g)$. Thus, we obtain the approximation $\mathfrak{B}'=\Omega_{(\mathfrak{g})}^{-1}\mathfrak{P}'$ of $\mathfrak{B}$ obeying $\mathfrak{B}'-\mathfrak{B}=O(\varepsilon)$. Now the substitution of $\mathfrak{B}'$ instead of $\mathfrak{B}$ into (\ref{connection of period matrices}) provides the approximation $\mathbb{B}'$ of some $b$-period matrix $\mathbb{B}$ of the double $\mathbb{M}$ of $(M,g)$, obeying $\mathbb{B}'-\mathbb{B}=O(\varepsilon)$. Thereby, Proposition \ref{convergence prop} is proved.

\subsection*{Statements and Declarations}
\paragraph{Competing Interests.} The author declares that there are no conflicts of interests and competing interests related to the present work.
\paragraph{Data Availibility Statement.} Data sharing not applicable to this article as no datasets were generated or analysed during the current study.

\end{document}